\documentclass[journal,twoside,web]{ieeecolor}

\makeatletter
\def\ps@titlepagestyle{%
  \def\@oddfoot{}%
  \def\@evenfoot{}%
  \if@confmode
    \def\@oddhead{}%
    \def\@evenhead{}%
  \else
    \def\@oddhead{\vbox{\vbox{}\hskip-3pc\null\hskip\logowidth
      \vbox{\begin{tabular}{@{\hspace*{37pt}}p{\firstpagerule}@{}}\\[-20pt]{\vbox{\hsize\firstpagerule\scriptsize\textsf\leftmark \hfill \textsf\thepage\hbox{}\par\vspace*{-3pt}
      \vbox{\color{subsectioncolor}\hrule height1pt width\firstpagerule depth0pt} }}\end{tabular}}}}%
    \def\@evenhead{\begin{tabular}{@{}p{\firstpagerule}@{}}\\[-38pt]{\vbox{\hsize\firstpagerule\scriptsize\textsf\thepage \hfill \textsf\leftmark\hbox{}\par\vspace*{-3pt} 
      \vbox{\color{subsectioncolor}\hrule height1pt width\firstpagerule depth0pt} }}\end{tabular}\hskip-5pc\null\hskip\logowidth}%
  \fi
}
\makeatother

\usepackage{generic}
\usepackage{amsmath,amssymb,amsfonts}
\usepackage{algorithmic}
\usepackage{graphicx}
\usepackage{booktabs} 
\usepackage{multirow}
\usepackage{hyperref}
\hypersetup{hidelinks}
\usepackage{textcomp}
\usepackage[mathscr]{euscript} 
\usepackage{mathtools}
\newcommand\mycom[2]{\genfrac{}{}{0pt}{}{#1}{#2}}

\usepackage[linesnumbered,ruled,vlined]{algorithm2e}
\usepackage{array}

\makeatletter
\ExplSyntaxOn
\NewDocumentCommand { \rightdiamond } { O{} m }
  {
    \mode_if_math:TF
      { \__examzh_cdotfill: \tag*{$\Diamond$} }
      { \__examzh_cdotfill:  }
    \mode_if_math:F
      {
        \par \noindent \ignorespaces
      }
  }
{
\cs_new:Npn \__examzh_cdotfill:
  {
    \mode_leave_vertical:
    \cleaders \hb@xt@ .44em {\hss $\cdot$ \hss} \hfill
    \kern\z@
  }
}
\ExplSyntaxOff
\makeatother

\newtheorem{definition}{Definition}
\newtheorem{theorem}{Theorem}
\newtheorem{corollary}{Corollary}

\newtheorem{example}{Example}
\newtheorem{proposition}{Proposition}
\newtheorem{remark}{Remark}

\def\BibTeX{{\rm B\kern-.05em{\sc i\kern-.025em b}\kern-.08em
    T\kern-.1667em\lower.7ex\hbox{E}\kern-.125emX}}
\begin{document}
\title{Verification of \textit{K}- and Infinite-Step Strong/Weak Anonymity Using Concurrent Compositions}
\author{Jiahui Zhang, Kuize Zhang, \IEEEmembership{Senior Member, IEEE}, Xiaoguang Han, \IEEEmembership{Member, IEEE},\\ and Zhiwu Li, \IEEEmembership{Fellow, IEEE}
\thanks{This work is partially supported by the National Natural Science Foundation of China under Grand No. 61903274 and the Science and Technology Fund, FDCT, Macau SAR under Grant No. 0029/2023/RIA1. \textit{(Corresponding author: Xiaoguang Han.)}}
\thanks{Jiahui Zhang and Zhiwu Li are with the Institute of Systems Engineering, Macau University of Science and Technology, Taipa 999078, China (e-mail: 3220006067@student.must.edu.mo; zwli@must.edu.mo).}
\thanks{Kuize Zhang is with School of Mathematics and Statistics, Xi’an Jiaotong University, Xi’an 710049, China (e-mail: kuize.zhang@unica.it).}
\thanks{Xiaoguang Han is with the College of Electronic Information and Automation, Tianjin University of Science and Technology, Tianjin 300457, China (e-mail: xiaoguanghan@tust.edu.cn).}}

\maketitle

\begin{abstract}
Anonymity is an information flow property that provides privacy protection in the sense of non-uniqueness of system information at certain moments with respect to observations.
The notion of \textit{K}-step anonymity in the context of discrete-event systems characterizes the scenario that the state estimates cannot be a singleton within at most \textit{K} observational steps prior to the current instant, while infinite-step anonymity is the same as \textit{K}-step anonymity without considering the limit on \textit{K}.
In this paper, we lucubrate \textit{K}- and infinite-step anonymity for partially-observed discrete-event systems modeled by non-deterministic finite-state automata.
First, we define two strong types and two weak types of \textit{K}- and infinite-step anonymity that are fundamentally different from the existing notions of \textit{K}- and infinite-step anonymity due to the consideration of strong and weak anonymous projections.
Then, we develop a new methodology by exploiting the concurrent-composition technique to verify these four types of anonymity.
Based on the constructed concurrent compositions, verifiable necessary and sufficient conditions for the four types of anonymity are provided, along with their complexity analysis.
Finally, the upper bounds on \textit{K} for \textit{K}-step strong anonymity and weak anonymity are computed.
\end{abstract}

\begin{IEEEkeywords}
Discrete-event system, security, privacy, \textit{K}-step anonymity, infinite-step anonymity, concurrent composition. 
\end{IEEEkeywords}

\section{Introduction}\label{sec:introduction}
\IEEEPARstart{S}{ecurity} and privacy of cyber-physical systems (CPSs) have increasingly received attention.
As an information flow property, anonymity characterizes the non-uniqueness of the intruder's estimate of a system's state.
Anonymity can be used in a wide range of practical scenarios such as anonymous voting, donations, and transactions \cite{Schneider1996}.

In recent decades, much work has been done on anonymity analysis and verification, see, e.g., \cite{Reiter1998}--\cite{Bryans2008}.
In \cite{Reiter1998}, the authors introduce a system called the Crowds to guarantee the anonymity of users on the internet.
For anonymous Web browsing of the Crowds system, a probabilistic model checker is proposed \cite{Shmatikov2004}. 
The authors in \cite{Dingledine2003} touch upon anonymity and reputation with a focus on anonymous remailers and anonymous publishing, and explain why systems can benefit from reputation.
The work in \cite{Bryans2008} introduces the notions of strong and weak $\mathcal{O}$-anonymity in labeled transition systems to capture different concealment requirements depending on the different observational capabilities of intruders. 
In recent years, anonymity has been extensively studied in discrete-event systems (DESs).
Specifically, the concepts of strong and weak anonymity are proposed in \cite{Feng2011} for describing the ability to hide an action among a set of actions, which are quite different from those in \cite{Bryans2008}.
The notions of anonymity defined in \cite{Bryans2008} and \cite{Feng2011} are incomparable.
To capture the inability of intruders to determine the accurate current state of a system, current-state anonymity is studied in finite-state automata \cite{Wu2014}.
Naturally, the notions of $K$-step anonymity and infinite-step anonymity are investigated, which generalizes the current instant to the last $K$ and arbitrary instants compared with the current-state anonymity \cite{Basilio2021}.
As a stronger version of $K$-anonymity \cite{Sweeney2002} in the computer science area, a notion of infinite-step $K$-anonymity is defined in \cite{Yin2020}.

The underlying observation mechanism is the key to analyzing observational properties of DESs, e.g., anonymity \cite{Bryans2008}--\cite{Yin2020}, opacity \cite{Saboori2007}--\cite{Yang2023}, detectability \cite{Shu2007}--\cite{Zhou2020}, and diagnosability \cite{Lafortune2018}--\cite{Ma2023}.
To characterize different capabilities for observing a system, the work in \cite{Bryans2008} first systematically summarizes three types of observation functions: static observation functions, dynamic observation functions \cite{Zhang2015}, and Orwellian-type observation functions \cite{Hou2022}.
In this paper, we formulate two types of observation functions: strong and weak anonymous projections to define interesting and useful types of anonymity.
Specifically, the strong anonymous projection is a static observation function that describes the scenarios in which an intruder's observation of some specific actions performed by different agents is the same, preventing the intruder from determining the identity of the agents.
In contrast, the weak anonymous projection is a dynamic observation function that characterizes the scenarios in which an intruder is able to recognize certain actions performed only by the same agent without knowing the agent's identity.
In this case, the observation of any action depends on its prefix that has happened.

Recently, a unified concurrent-composition method has been developed for the verification problem of state/event inference and concealment for DESs modeled by finite-state automata \cite{Kuize2023}.
In addition, standard and strong current-state opacity, initial-state opacity, $K$-step opacity, and infinite-step opacity can be verified by using a combination of concurrent composition and the classical observer \cite{Kuize2023}--\cite{Xiaoguang2022}.
The concurrent-composition technique shows its efficiency in verifying these properties in DESs compared with the existing methods.

Inspired by the above work, this paper mainly focuses on the verification of $K$- and infinite-step anonymity under strong and weak anonymous projections by exploiting the concurrent-composition technique.
Specifically, the main contributions of this research are as follows.
\begin{itemize}
\item To characterize different inabilities of an intruder to infer whether a system of interest is in an exact state at some certain instants, we formulate four notions of $K$- and infinite-step anonymity under strong and weak anonymous projections that are essentially different from the existing notions of $K$- and infinite-step anonymity \cite{Basilio2021}, \cite{Yin2020}.
\item A general methodology for verifying these four types of anonymity is developed by constructing two different information structures $Cc(G_m, Obs_m(G))$ (see Eq.~(\ref{eq:5})) and $Cc(\widehat{G}_w, Obs_w(\widehat{G}))$ (see Eq.~(\ref{eq:9})), both of which are concurrent compositions of a particular automaton and a particular observer, where $Cc(G_m, Obs_m(G))$ (resp., $Cc(\widehat{G}_w, Obs_w(\widehat{G}))$) is used to verify $K$- and infinite-step anonymity under the strong (resp., weak) anonymous projection.
\item 
Based on the constructed concurrent compositions, verifiable necessary and sufficient conditions are provided to verify these four types of anonymity separately.
In addition, we provide an algorithm for computing upper bounds on $K$ for $K$-step strong anonymity and weak anonymity.
Finally, an anonymous questioning process in chemical engineering is shown as a case study to illustrate the proposed notions of anonymity.
\end{itemize}

\textbf{Structure.}
The remainder of this paper is structured as follows.
The preliminaries used throughout this paper are presented in Section \ref{sec:2}.
Section \ref{sec:3} formulates four definitions of $K$- and infinite-step anonymity under strong and weak anonymous projections.
Section \ref{sec:4} develops a methodology for verifying the proposed four types of anonymity by exploiting the concurrent composition technique.
We explore the maximal value of $K$ and calculate the upper bounds on $K$ in Section \ref{sec:5} regarding the two types of $K$-step anonymity separately.
Section \ref{sec:6} provides a case study on an anonymous questioning process.
Finally, we conclude this paper in Section \ref{sec:7}.


\section{Preliminaries}\label{sec:2}
\subsection{Notation}\label{sec:2.1}
In this paper, let $\mathbb{N}$ be the set of non-negative integers.
We denote by $P$, $P_{str}$, and $P_w$ the natural projection, the strong anonymous projection, and the weak anonymous projection, respectively.
The set of events is symbolized by $\Sigma$.
Notation $\Sigma^{\ast}$ stands for the set of all finite strings over $\Sigma$, including the empty string $\epsilon$.
Given a string $s = \sigma_1\sigma_2 \ldots \sigma_n$, we use $|s|$ to denote its length, i.e., $|s|= n $ and $|\epsilon| = 0$.
Notation $\Gamma$ is used to represent the set of anonymous events.
Let $\gamma$ be the unique observation of all anonymous events through the strong anonymous projection $P_{str}$.
We denote by $\Gamma_w=\{\gamma_1, \gamma_2, \dots, \gamma_{|\Gamma|}\}$ the set of distinct observation labels of anonymous events in all strings generated by a system under the weak anonymous projection $P_w$, where $|\Gamma|$ is the number of the anonymous events in $\Gamma$.
Given a state set $Q$, $2^Q$ and $UR(Q)$ signify the power set of $Q$ and the unobservable reach of $Q$, respectively.
We use notation $\widehat{E}_{G}(P_{str}, \alpha|\alpha\beta)$ (resp., $\widehat{E}_{G}(P_w, \alpha|\alpha\beta)$) to represent the \emph{delayed state estimate} of system $G$ under the \emph{strong} (resp., \emph{weak}) \emph{anonymous projection} $P_{str}$ (resp., $P_w$) at the instant that $\alpha$ has just been observed upon observing $\alpha\beta$.
In addition, the concurrent composition of automata $G^1$ and $G^2$ is written as $Cc(G^1, G^2)$.

\subsection{System model}\label{sec:2.2}
A non-deterministic finite-state automaton (NFA) is a quadruple
\begin{equation}\label{eq:1}
G = (X, \Sigma, \delta, X_0),
\end{equation}
where
\begin{itemize}
\item{$X$ is a finite set of states,}
\item{$\Sigma$ is a finite set of events,}
\item{$\delta: X\times \Sigma\rightarrow 2^X$ is the transition function, which characterizes the dynamics of $G$: $y\in \delta(x,\sigma)$ denotes that there exists a transition labeled by event $\sigma$ reaching state $y$ from state $x$, and}
\item{$X_0\subseteq X$ is a set of initial states.}
\end{itemize}
The transition function can be extended to $\delta: X\times \Sigma^{\ast}\rightarrow 2^X$ in the usual manner. 
$\mathcal{L}(G,x)$ is used to denote the language generated by $G$ from the state $x$, i.e., $\mathcal{L}(G,x)=\{s\in \Sigma^*|\delta(x,s)\neq \emptyset\}$.
The set $\mathcal{L}(G)=\{s\in \Sigma^*|\exists x_0\in X_0: \delta(x_0, s)\neq \emptyset\}$ denotes the language generated by $G$.
Note that $G$ is called a deterministic finite-state automaton (DFA) if $|X_0|=1$ and $|\delta(x,\sigma)|\leq 1$ for all $x\in X$ and all $\sigma\in \Sigma$.

In general, a system modeled by $G$ is partially observable.
The event set $\Sigma$ is divided into two disjoint subsets $\Sigma_{o}$ and $\Sigma_{uo}$, i.e., $\Sigma = \Sigma_{o} \dot\cup \Sigma_{uo}$, where $\Sigma_o$ is the set of observable events and $\Sigma_{uo}$ is the set of unobservable events.
The natural projection 
$P: \Sigma^*\rightarrow \Sigma_o^*$ is recursively defined by $P(\epsilon)=\epsilon \mbox{ and }$
\begin{equation*}
P(s\sigma)=
\begin{cases}
P(s)\sigma,&\mbox{if }\sigma\in \Sigma_o,\\
P(s),&\mbox{otherwise}.
\end{cases}
\end{equation*}
where $s\in \Sigma^* \ \mbox{and} \ \sigma\in \Sigma$. 
Accordingly, the projection of language generated by $G$, denoted by $P(\mathcal{L}(G))$, is $P(\mathcal{L}(G))=\{P(s)\mid s\in \mathcal{L}(G)\}$.
We refer the reader to \cite{Cassandras2008} for details.

Since system $G$ is partially observed, an intruder needs to estimate the state reached by $G$ based on an observed string, which involves state estimation problems.
The following type of state estimate is called the delayed state estimate under the natural projection $P$ \cite{Cassandras2008}.

\begin{definition}\label{def:2.1}(Delayed state estimate)
Let $\alpha\beta$ be an observation string generated by a system $G = (X, \Sigma, \delta, X_0)$ under the natural projection $P$, i.e., $\alpha\beta\in P(\mathcal{L}(G))$.
The \emph{delayed state estimate} under the natural projection $P$ for the instant of $\alpha$ upon the fact that $\alpha\beta$ is exactly observed is the set of states (at which the system could be in) $|\beta|$ steps ahead, i.e.,
    \begin{equation*}
    \widehat{E}_{G}(P, \alpha|\alpha\beta)=
    \begin{Bmatrix}
    x\in X\Bigg|                     
    &
    \begin{matrix}
    [\exists x_0\in X_0, \exists s\in \mathcal{L}(G, x_0):\\
    P(s)=\alpha \land x\in \delta(x_0, s)] \land\\
    [\exists t\in \mathcal{L}(G, x): P(st)=\alpha\beta] 
    \end{matrix}               
    \end{Bmatrix}.\rightdiamond{}
    \end{equation*}
\end{definition}
By definition, the \emph{delayed state estimate} is the collection of possible states that system $G$ may visit prior to some steps of the current observation. 
The cardinality of $\widehat{E}_{G}(P, \alpha|\alpha\beta)$ is denoted as $|\widehat{E}_{G}(P, \alpha|\alpha\beta)|$, indicating the number of the states in $\widehat{E}_{G}(P, \alpha|\alpha\beta)$.
In addition, we define the unobservable reach of a state subset $Q\subseteq X$ as
\begin{equation*}
UR(Q)=\{q'\in X|\exists q\in Q, \exists s\in \Sigma_{uo}^*: q'\in \delta(q, s)\}.
\end{equation*}

\section{Four types of K- and infinite-step anonymity under the strong/weak anonymous projection}\label{sec:3}
\subsection{K- and infinite-step strong anonymity}
In this paper, we assume that an action (modeled by an event $\sigma\in \Sigma$) is observed (or detected) by an intruder if a sensor is deployed \cite{Hadjicostis2020}.
On the other hand, if no sensor is deployed, an intruder will never observe or detect the occurrence of any action.
An action that can (resp., cannot) be observed by an intruder is called an observable (resp., unobservable) event.
As mentioned in \cite{Schneider1996}, the persons who perform actions, such as anonymous voting, donations, and transactions, are called agents.
In situations where anonymous operations are required, e.g., the aforementioned anonymous transactions, some agents may want to hide their identities.
In this paper, an action that can be observed by an intruder and taken by an agent who wants to conceal his/her identity is modeled as an anonymous event.
We set a subset of observable events, denoted by $\Gamma\subseteq \Sigma_o$, as the anonymous event set.
When an agent performs the action that is modeled as an anonymous event, he/she does not want to be distinguished from other agents who perform actions in some sense.
We assume $|\Gamma|\geq 2$ to make it meaningful to conceal at least two agents' identities.

Suppose that anonymous processors are deployed on the sensors related to the actions modeled by anonymous events, which can help agents hide their identities. 
In this case, an intruder's observations of these actions, modeled by different anonymous events and performed by different agents, are identical and indistinguishable.
We use a new symbol $\gamma\notin \Sigma$ to denote the identical observation obtained by the intruder.
On this basis, an ideal anonymous processor is modeled as the strong anonymous projection as follows.
\begin{definition}\label{def:3.1-2}
    An ideal anonymous processor, i.e., a processor whose processing results for all anonymous events are the same, and whose processing result for a non-anonymous observable event is still the event itself, is modeled as the \emph{strong anonymous projection} $P_{str}: \Sigma^*\rightarrow (\Sigma_o\cup \{\gamma\})^*$ (a static observation function) defined by $P_{str}(\epsilon)=\epsilon \mbox{ and }$
\begin{equation*}
P_{str}(s\sigma)=
\begin{cases}
P_{str}(s)\gamma,&\mbox{if }\sigma\in \Gamma,\\
P_{str}(s)\sigma,&\mbox{if }\sigma\in \Sigma_o\backslash \Gamma,\\
P_{str}(s),&\mbox{otherwise}.
\end{cases}
\end{equation*}
where $s\in \Sigma^* \ \mbox{and} \ \sigma\in \Sigma$.$\hfill\Diamond$
\end{definition}


By definition, an intruder can observe the results after processing different types of events by the ideal anonymous processor (strong anonymous projection) as follows.
For an anonymous event, its observation is $\gamma$.
For an observable event that is not an anonymous event, its observation is itself, since there is no anonymous processor deployed on the sensor related to this event.
Obviously, the intruder observes nothing once an unobservable event occurs since the sensors associated with this event are not deployed.

By changing the natural projection $P$ in $\widehat{E}_{G}(P, \alpha|\alpha\beta)$ to the strong anonymous projection $P_{str}$, $\widehat{E}_{G}(P_{str}, \alpha|\alpha\beta)$ is obtained, representing the \emph{delayed state estimate under the strong anonymous projection} $P_{str}$ at the instant that $\alpha$ has just been exposed upon observing $\alpha\beta$.
Next, we use the notations introduced above to describe the scenario that an ideal anonymous processor is employed, i.e., an intruder cannot distinguish between any two anonymous events.
In this situation, the observations processed by the ideal anonymous processor are obtained by the intruder.
The intruder is expected not be able to determine for sure whether the system is in only one state $K$ observational steps before the current time instant.
As a result, the notion of $K$-step strong anonymity is formulated.

\begin{definition}\label{def:3.1}($K$-step strong anonymity)
Given a system $G=(X, \Sigma, \delta, X_0)$, a non-negative integer $K\in \mathbb{N}$, and the strong anonymous projection $P_{str}$ w.r.t. a set of anonymous events $\Gamma$ and a set of observable events $\Sigma_o$ with $\Gamma\subseteq \Sigma_o$, $G$ is said to be $K$-step strongly anonymous w.r.t. $\Sigma_o$, $\Gamma$, and $P_{str}$ if for all $st\in \mathcal{L}(G)$ with $P_{str}(s)=\alpha$, $P_{str}(st)=\alpha\beta$, and $|\beta|\leq K$, it holds
\begin{equation*}
|\widehat{E}_{G}(P_{str}, \alpha|\alpha\beta)|>1.\rightdiamond{}
\end{equation*}
\end{definition}
In plain words, $K$-step strong anonymity means that at any instant, an intruder cannot infer which state a system is exactly in at most $K$ observational steps before the current instant.
When a system satisfies $K$-step strong anonymity for any $K$, we say that the system satisfies infinite-step strong anonymity, as defined below.

\begin{definition}\label{def:3.2}(Infinite-step strong anonymity)
Given a system $G=(X, \Sigma, \delta, X_0)$ and the strong anonymous projection $P_{str}$ w.r.t. a set of anonymous events $\Gamma$ and a set of observable events $\Sigma_o$ with $\Gamma\subseteq \Sigma_o$, $G$ is said to be infinite-step strongly anonymous w.r.t. $\Sigma_o$, $\Gamma$, and $P_{str}$ if for all $st\in \mathcal{L}(G)$ with $P_{str}(s)=\alpha$ and $P_{str}(st)=\alpha\beta$, it holds
\begin{equation*}
|\widehat{E}_{G}(P_{str}, \alpha|\alpha\beta)|>1.\rightdiamond{}
\end{equation*}
\end{definition}

Intuitively, infinite-step strong anonymity requires that after observing $\alpha\beta$, the cardinality of the set of state estimates of system $G$ after $\alpha$ has just been observed inferred by an intruder are greater than $1$, so that the intruder cannot determine which state system $G$ is exactly in at any instant. 

\begin{example}\label{ex:3.1}
Consider the system $G$ shown in Fig. \ref{fig:3.1} and assume $K=1$.
To verify whether the system satisfies $1$-step strong anonymity, we need to check whether the cardinality of the delayed state estimate $\widehat{E}_{G}(P_{str}, \alpha|\alpha\beta)$ is greater than $1$ for all $\alpha\beta\in P_{str}(\mathcal{L}(G))$ with $|\beta|=0$ and $|\beta|=1$.
When $|\beta|=0$, taking the strings whose strong anonymous projections are $\gamma$ and $\gamma c$ for example respectively, one has $\widehat{E}_{G}(P_{str}, \gamma|\gamma)=\{2,3\}$ and $\widehat{E}_{G}(P_{str}, \gamma c|\gamma c)=\{4,5,6\}$.
It means that the number of states in the $0$ delayed state estimates is greater than 1 when an intruder observes $\gamma$ and $\gamma c$.
When $|\beta|=1$, one has $\widehat{E}_{G}(P_{str}, \epsilon|\gamma)=\{0,1\}$ and $\widehat{E}_{G}(P_{str}, \gamma|\gamma c)=\{2,3\}$.
The number of possible states the system $G$ could be in $1$ observational step ago is greater than $1$ when observing $\gamma$ and $\gamma c$.
Moreover, one finds that $|\widehat{E}_{G}(P_{str}, \alpha|\alpha\beta)|>1$ holds for all $\alpha\beta\in P_{str}(\mathcal{L}(G))$ with $|\beta|\leq 1$.
By Definition \ref{def:3.1}, the system $G$ is $1$-step strongly anonymous w.r.t. $\Sigma_o$, $\Gamma$, and $P_{str}$.

Further, we find that $|\widehat{E}_{G}(P_{str}, \alpha|\alpha\beta)|>1$ always holds regardless of the value of $K$, suggesting that the system $G$ also satisfies infinite-step strong anonymity.
It means that an intruder cannot accurately infer whether the system $G$ is in a singleton state at any instant.
$\hfill\square$
\end{example}

\begin{figure}[ht]
\centering
\includegraphics[width=1.9in]{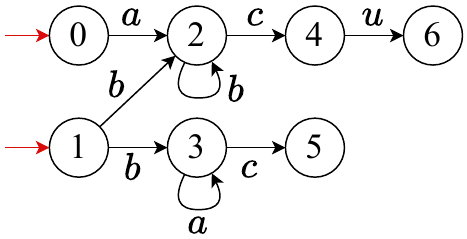}
\caption{System $G$ with $\Sigma_o=\{a, b, c\}$ and $\Gamma=\{a,b\}$.}
\label{fig:3.1}
\end{figure}

\subsection{K- and infinite-step weak anonymity}
Ideally, as stated in Definition \ref{def:3.1-2}, anonymous processors process different anonymous events performed by different agents with a fixed observational result $\gamma$.
However, anonymous processors may become less powerful (i.e., non-ideal) sometimes.
The non-ideal anonymous processor produces the same result for specific actions only performed by the same agent.
In other words, an intruder only knows when the same agent performs the specific actions without knowing which agent.

Some notations are introduced before formulating the non-ideal anonymous processor. 
Let $\Gamma_w=\{\gamma_1, \gamma_2, \dots, \gamma_{|\Gamma|}\}$ be a set of distinct observation labels of the anonymous events of all strings $s\in \mathcal{L}(G)$ under the weak anonymous projection.
In other words, set $\Gamma_w$ represents all possible labels for the first appearance of the anonymous events contained in string $s$.
We assume $\Gamma_w$ is disjoint with $\Sigma_o$, i.e., $\Gamma_w\cap \Sigma_o=\emptyset$ to distinguish anonymous events with the other observable events.
With a slight abuse of notation, write $\sigma\in s$ to denote that an event $\sigma\in \Sigma$ appears in string $s\in \Sigma^*$.
The set of anonymous events appearing in string $s$ is denoted by $E_{\Gamma}(s)=\Gamma\cap E(s)$, where $E(s)=\{\sigma\in \Sigma| \sigma\in s\}$. 
For any string $s=\mu\sigma v\in \Sigma^*$ with $\sigma\in \Gamma$, $\mu \in (\Sigma\backslash \{\sigma\})^*$, and $v\in \Sigma^*$, let $R(s, \sigma)=|E_{\Gamma}(\mu)|+1$ be the location index of the first occurrence of event $\sigma\in\Gamma$ in string $s$.
A non-ideal anonymous processor is formally characterized by the weak anonymous projection as follows.
\begin{definition}\label{def:3.2-2}
    A non-ideal anonymous processor, i.e., a processor whose processing result for an anonymous event in a string depends on the order of its first appearance in this string, and whose processing result for an observable event that is not anonymous event is still the event itself, is modeled as the \emph{weak anonymous projection} $P_w: \Sigma^* \rightarrow (\Sigma_o\cup \Gamma_w)^*$ (a dynamic observation function) formulated by
$P_w(\epsilon)=\epsilon$ and
\begin{equation*}
P_{w}(s\sigma)=
\begin{cases}
P_{w}(s)\gamma_{R(s\sigma, \sigma)},&\mbox{if }\sigma\in \Gamma,\\
P_{w}(s)\sigma,&\mbox{if }\sigma\in \Sigma_o\backslash \Gamma,\\
P_{w}(s),&\mbox{otherwise}.
\end{cases}
\end{equation*}
where $s\in \Sigma^* \mbox{~and~} \sigma\in \Sigma$.$\hfill\Diamond$
\end{definition}

In plain words, for $P_w$, the observation of an anonymous event depends on the order of its first appearance in a string, but for $P_{str}$, the observation of an anonymous event has nothing to do with the order of its first appearance.
As shown in Table \ref{tab:3.2}, to demonstrate the difference between $P_{str}$ and $P_w$ more intuitively, we list string $s\in \mathcal{L}(G)$, $P_{str}(s)$, and $P_w(s)$ in the first, second, and third columns, respectively, where system $G$ is shown in Fig. \ref{fig:3.1}.
From Table \ref{tab:3.2}, one sees that the strong anonymous projection of event $a/b$ is always fixed $\gamma$ due to its static property.
However, the weak anonymous projection has a dynamic characterization.
For example, the weak anonymous projection of the anonymous event $a$ in string $ab$ is $\gamma_1$, while in $ba$ is $\gamma_2$.
The reason is that the order in which event $a$ first appears in string $ab$ (resp., $ba$) is $1$ (resp., $2$) if we only focus on the order in which anonymous events first appear in string $ab$ (resp., $ba$).
Additionally, we finds that the strong anonymous projection of strings $ab$ and $bb$ (resp., strings $abc$ and $bbc$) are the same but their weak anonymous projections are different.
The technical details of calculating the weak anonymous projection for a string $s\in \mathcal{L}(G)$ are presented in Example \ref{ex:3.2}.

\begin{table}[htp]
\renewcommand{\arraystretch}{1.1}
\caption{Illustrative example}
\label{tab:3.2}
\centering
\resizebox{6.0cm}{!}{
\begin{tabular}{ccc}
\toprule  
string $s\in \mathcal{L}(G)$&$P_{str}(s)$&$P_w(s)$\\
\midrule 
$a/b$ & $\gamma$ & $\gamma_1$ \\ 
$ab/ba$ & $\gamma\gamma$ & $\gamma_1\gamma_2$ \\
$bb$ & $\gamma\gamma$ & $\gamma_1\gamma_1$ \\
$abc$ & $\gamma\gamma c$ & $\gamma_1\gamma_2 c$ \\
$bbc$ & $\gamma\gamma c$ & $\gamma_1\gamma_1 c$ \\
$abbc$ & $\gamma\gamma\gamma c$ & $\gamma_1\gamma_2\gamma_2 c$ \\
$\cdots$ & $\cdots$ & $\cdots$ \\
\bottomrule 
\end{tabular}}
\end{table}

\begin{example}\label{ex:3.2}
Consider again the system $G$ in Fig. \ref{fig:3.1} with $\Gamma=\{a,b\}$.
For string $s=b$, one has $E(s)=\{b\}$, $E_{\Gamma}(\epsilon)=\Gamma\cap E(\epsilon)=\emptyset$, and $R(b,b)=|E_{\Gamma}(\epsilon)|+1=1$.
Then, one concludes $P_w(b)=\gamma_{R(b,b)}=\gamma_1$ due to $b\in \Gamma$.
For string $t=ba$, $E(t)=\{a, b\}$, $E_{\Gamma}(s)=E_{\Gamma}(b)=\{b\}$, and $R(ba,a) = |E_{\Gamma}(b)| + 1 = 2$ can be obtained.
The weak anonymous projection of string $ba$ is $P_w(ba) = P_w(b)\gamma_{R(ba,a)} = \gamma_1\gamma_2$.
Similarly, for string $s'=a$, $P_w(s')= \gamma_{R(a, a)} =\gamma_1$ is derived owing to $a\in \Gamma$ and $R(a, a) = |E_\Gamma(\epsilon)| + 1 = 1$.
The weak anonymous projection of string $t'=ab$ is $P_w(ab) = P_w(s')\gamma_{R(ab,b)} = \gamma_1\gamma_2$ due to $b\in \Gamma$ and $R(ab, b)=|E_\Gamma(a)| + 1 = 2$.
As shown above, we have $P_w(ab)=P_w(ba)=\gamma_1\gamma_2$, which shows that $P_w$ fundamentally differs from $P_{str}$, due to $P_{str}(ab)=P_{str}(ba)=\gamma\gamma$.
$\hfill\square$
\end{example}

By changing the natural projection $P$ in $\widehat{E}_{G}(P, \alpha|\alpha\beta)$ to the weak anonymous projection $P_w$, $\widehat{E}_{G}(P_w, \alpha|\alpha\beta)$ is obtained, representing the \emph{delayed state estimate under the weak anonymous projection} $P_w$.
Consider the scenario that the processing capability of the anonymous processor is non-ideal.
It means that an intruder can only partially distinguish between anonymous events to a limited extent weakly.
In this case, the intruder obtains observations processed by a non-ideal anonymous processor.
We still expect that the intruder's estimate of the system state over the past $K$ observational steps is not a singleton state.
With the previously defined notations in mind, $K$-step weak anonymity is formally defined for a system $G$ as follows.

\begin{definition}\label{def:3.3}($K$-step weak anonymity)
Given a system $G=(X, \Sigma, \delta, X_0)$, a non-negative integer $K\in \mathbb{N}$, and the weak anonymous projection $P_w$ w.r.t. a set of anonymous events $\Gamma$ and a set of observable events $\Sigma_o$ with $\Gamma\subseteq \Sigma_o$, $G$ is said to be $K$-step weakly anonymous w.r.t. $\Sigma_o$, $\Gamma$, and $P_w$ if for all $st\in \mathcal{L}(G)$ with $P_w(s)=\alpha$, $P_w(st)=\alpha\beta$, and $|\beta|\leq K$, it holds
\begin{equation*}
|\widehat{E}_{G}(P_w, \alpha|\alpha\beta)|>1.\rightdiamond{}
\end{equation*}
\end{definition}
By definition, $K$-step weak anonymity under weak anonymous projection $P_{\omega}$ indicates that an intruder is not able to infer with certainty which state a system is in at most $K$ observational steps before the current instant, even if the intruder has weak discrimination on anonymous events.
Similarly, infinite-step weak anonymity can be obtained by removing the restriction on the length of the observational string $\beta$ in $K$-step weak anonymity.

\begin{definition}\label{def:3.4}(Infinite-step weak anonymity)
Given a system $G=(X, \Sigma, \delta, X_0)$ and the weak anonymous projection $P_w$ w.r.t. a set of anonymous events $\Gamma$ and a set of observable events $\Sigma_o$ with $\Gamma\subseteq \Sigma_o$, $G$ is said to be infinite-step weakly anonymous w.r.t. $\Sigma_o$, $\Gamma$, and $P_w$ if for all $st\in \mathcal{L}(G)$ with $P_w(s)=\alpha$ and $P_w(st)=\alpha\beta$, it holds
\begin{equation*}
|\widehat{E}_G(P_w, \alpha|\alpha\beta)|>1.\rightdiamond{}
\end{equation*}
\end{definition}

Similar to Definition \ref{def:3.2}, the infinite-step weak anonymity also characterizes that the number of state estimates at any past instant inferred by an intruder are greater than $1$.
Compared with the strong anonymity projection $P_{str}$, the weak anonymity projection $P_w$ provides a more detailed description of anonymous events in a string, which may lead to the situation that strong anonymous projections of two strings are the same, but their weak anonymous projections are different.
This also makes the verification of $K$- and infinite-step weak anonymity more complex.
The following example illustrates the difference between strong anonymity and weak anonymity.

\begin{example}\label{ex:3.3}
Reconsider the system $G$ shown in Fig. \ref{fig:3.1}.
Suppose $K=0$.
For the purpose of verifying $0$-step weak anonymity, we need to check whether $|\widehat{E}_{G}(P_w, \alpha\beta|\alpha\beta)| > 1$ holds for all $\alpha\beta\in P_w(\mathcal{L}(G))$.
Take string $s = bb$ for example.
One has $P_w(s) = \gamma_1\gamma_1$ and $|\widehat{E}_{G}(P_w, \gamma_1\gamma_1|\gamma_1\gamma_1)| = |\{2\}| = 1$.
It means that an intruder is sure that the system is in state $2$ currently (i.e., $0$ step ago) when observation $\gamma_1\gamma_1$ is obtained.
Obviously, system $G$ is not $0$-step weakly anonymous w.r.t. $\Sigma_o$, $\Gamma$, and $P_w$.
Hence, $G$ does not satisfy $K$-step weak anonymity for any $K$, suggesting that $G$ is not infinite-step weakly anonymous. 
Compared that $G$ is $1$-step strongly anonymous w.r.t. $\Sigma_o$, $\Gamma$, and $P_{str}$, one sees the huge difference between strong anonymity and weak anonymity.
$\hfill\square$
\end{example}

Similar to the proposed four types of anonymity, state-based opacity is also an information flow property based on state estimates.
In general, when state-based opacity is concerned, the states of a system are typically divided into secret states and non-secret states.
Opacity requires that an intruder can never be certain that a system is in secret states. 
It permits the intruder to determine that the system is in a specific non-secret state.
However, the proposed anonymity captures a different scenario compared to opacity.
It requires that the state estimates inferred by an intruder always contain at least two states.
In other words, the essence of the proposed anonymity lies in preventing any form of state determination so as to the maintenance of state uncertainty.
Moreover, in this paper, we consider this security and privacy requirement for the past $K$- and infinite observational steps under both ideal and non-ideal anonymous processors, respectively.
In summary, opacity and the proposed anonymity are two types of information flow properties that are used to characterize different security and privacy requirements.
Specifically, opacity focuses on the ambiguity of the secret states of a system, while the proposed anonymity emphasizes the non-uniqueness of all states of a system.
Based on their respective characteristics, opacity applies to the scenarios requiring ``selective confidentiality", while the proposed anonymity applies to those requiring ``universal confidentiality" or ``identity protection".

\begin{remark}\label{re:3.2}
Here, we discuss the distinction and relationship among different anonymity notions.
When $\Gamma\neq \emptyset$, the proposed four notions of anonymity are different from those of \cite{Wu2014} and \cite{Basilio2021} since it depends on how $\Gamma$ is defined.
If $\Gamma=\emptyset$, i.e., the strong and weak anonymous projections reduce to the natural projection $P$, then the notions of $K$-step (resp., infinite-step) strong anonymity and weak anonymity reduce to $K$-step (resp., infinite-step) anonymity defined in \cite{Basilio2021}.
In this case, to verify whether a system satisfies $K$-step anonymity (resp., infinite-step) anonymity, we only need to check whether $|\widehat{E}_{G}(P, \alpha|\alpha\beta)|>1$ holds or not, where $|\beta|\leq K$ (resp., $|\beta|$ has no restriction).
For instance, consider the system $G$ shown in Fig. \ref{fig:3.1} with $\Gamma=\emptyset$.
After observing $abc$, an intruder certainly infers that the system is in state $2$ after $ab$ occurs, i.e., $|\widehat{E}_{G}(P, ab|abc)|=|\{2\}|=1$.
Thus, system $G$ does not satisfy $1$-step anonymity, suggesting that $G$ also does not satisfy infinite-step anonymity.

Further, if $\Gamma=\emptyset$ and $K=0$, both $K$-step strong anonymity and weak anonymity in Definitions \ref{def:3.1} and \ref{def:3.3} reduce to current-state anonymity defined in \cite{Wu2014} and \cite{Basilio2021}.
The requirement of current-state anonymity is that the set of possible current states of the system is never a singleton set.
For example, reconsider the system $G$ shown in Fig. \ref{fig:3.1} with $\Gamma=\emptyset$.
An intruder determines that the system is in state $3$ after observing $ba$, i.e., $|\widehat{E}_{G}(P, ba|ba)|=|\{3\}|=1$, which violates current-state anonymity.

Overall, the proposed four types of anonymity are more generalized versions of those of existing anonymity.
In other words, the existing current-state anonymity, $K$-step anonymity, and infinite-step anonymity can be viewed as their special cases.
In addition, by definitions, the relationship between $K$-step (resp., infinite-step) strong anonymity and weak anonymity is that $G$ is $K$-step (resp., infinite-step) strongly anonymous if it is $K$-step (resp., infinite-step) weakly anonymous, but not vice versa.
$\hfill\square$
\end{remark}

\begin{proposition}\label{pro:3.1}
Given a system $G=(X, \Sigma,\delta, X_0)$, a non-negative integer $K\in \mathbb{N}$, a set of observable events $\Sigma_o$, a set of anonymous events $\Gamma$ with $\Gamma\subseteq \Sigma_o$, and the strong and weak anonymous projection $P_{str}$ and $P_w$, the following statements hold:\\
1) system $G$ is $\widetilde{K}$-step strongly (resp., weakly) anonymous w.r.t. $\Sigma_o$, $\Gamma$, and $P_{str}$ (resp., $P_w$) if $G$ is $K$-step strongly (resp., weakly) anonymous w.r.t. $\Sigma_o$, $\Gamma$, and $P_{str}$ (resp., $P_w$), and $0 \leq \widetilde{K} < K$;\\
2) it is $K$-step strongly (resp., weakly) anonymous w.r.t. $\Sigma_o$, $\Gamma$, and $P_{str}$ (resp., $P_w$) for any $K\in \mathbb{N}$ if $G$ is infinite-step strongly (resp., weakly) anonymous w.r.t. $\Sigma_o$, $\Gamma$, and $P_{str}$ (resp., $P_w$).
\end{proposition}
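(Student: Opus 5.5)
Both statements follow directly from the definitions by comparing the sets of pairs $(s,t)$ over which the condition $|\widehat{E}_{G}(P_{str}, \alpha|\alpha\beta)|>1$ (resp., $|\widehat{E}_{G}(P_w, \alpha|\alpha\beta)|>1$) is quantified. The delayed state estimate $\widehat{E}_{G}(P_{str}, \alpha|\alpha\beta)$ (resp., $\widehat{E}_{G}(P_w, \alpha|\alpha\beta)$) depends only on the observation pair $(\alpha,\alpha\beta)$ and on $G$, never on $K$. So enlarging or shrinking the admissible range of $|\beta|$ only changes which estimates are required to have more than one state. I will treat the strong case in detail; the weak case is word-for-word the same with $P_{str}$ replaced by $P_w$ and Definitions~\ref{def:3.1}, \ref{def:3.2} replaced by Definitions~\ref{def:3.3}, \ref{def:3.4}.

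For statement 1), assume $G$ is $K$-step strongly anonymous and fix $0\leq \widetilde{K}<K$. Take an arbitrary $st\in\mathcal{L}(G)$ with $P_{str}(s)=\alpha$, $P_{str}(st)=\alpha\beta$, and $|\beta|\leq \widetilde{K}$. Since $|\beta|\leq\widetilde{K}<K$, the pair $(s,t)$ also satisfies the hypotheses of Definition~\ref{def:3.1} with bound $K$. Hence $|\widehat{E}_{G}(P_{str}, \alpha|\alpha\beta)|>1$. As $(s,t)$ was arbitrary, $G$ is $\widetilde{K}$-step strongly anonymous.

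For statement 2), assume $G$ is infinite-step strongly anonymous and fix any $K\in\mathbb{N}$. Every $st\in\mathcal{L}(G)$ with $P_{str}(s)=\alpha$, $P_{str}(st)=\alpha\beta$, and $|\beta|\leq K$ satisfies in particular the (unrestricted) hypotheses of Definition~\ref{def:3.2}. Hence $|\widehat{E}_{G}(P_{str}, \alpha|\alpha\beta)|>1$, and so $G$ is $K$-step strongly anonymous.

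There is no genuine obstacle. The only point needing care is in the weak case: $P_w$ is a dynamic observation function, so $P_w(st)$ need not be obtained from $P_w(s)$ by projecting $t$ independently. This does not matter, because the definitions only use the decomposition $P_w(st)=\alpha\beta$ with $P_w(s)=\alpha$, which is well defined since $P_w(st)$ has $P_w(s)$ as a prefix by the recursive definition of $P_w$. The estimate is again a fixed set independent of $K$, so the inclusion-of-quantifier-ranges argument goes through unchanged.
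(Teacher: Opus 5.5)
Your proof is correct and follows the same approach as the paper: both statements hold because the set of pairs $(s,t)$ quantified over for the smaller bound (resp.\ for $K$-step) is contained in the set quantified over for the larger bound (resp.\ for infinite-step), while the delayed state estimate itself does not depend on $K$. The paper simply cites Definitions~\ref{def:3.1}--\ref{def:3.4} and calls the result trivial; your write-up spells out that same unfolding, and your note that $P_w(s)$ is a prefix of $P_w(st)$ is a correct, harmless extra detail.
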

\begin{proof}
By Definitions \ref{def:3.1}, \ref{def:3.2}, \ref{def:3.3}, and \ref{def:3.4}, statements 1) and 2) are trivially true.
\end{proof}

\section{Verification of anonymity using concurrent compositions}\label{sec:4}
\subsection{Verification of K- and infinite-step strong anonymity}
In this subsection, we focus on the verification of $K$- and infinite-step strong anonymity.
According to the definition of strong anonymous projection, the observer of system $G$ is constructed as a DFA, denoted by
\begin{equation}\label{eq:3}
Obs(G) = (X_{obs}, \Sigma_{obs}, \delta_{obs}, X_{0,obs}),
\end{equation}
where
\begin{itemize}
\item{$X_{obs}= 2^{X}$ is a finite set of states\footnote{For the ease of showing anonymity verification algorithms, the set $X_{obs}$ includes all states that are reachable and unreachable (from $X_{0,obs}$). 
When depicting an automaton, we usually only draw its reachable part or its part related to verification of the corresponding definition of anonymity. 
In the sequel, we will define $Obs_m(G)$ from the reachable states in $X_{obs}$.},}
\item{$\Sigma_{obs}=(\Sigma_o\backslash \Gamma) \cup \{\gamma\}$ is a finite set of observable events,}
\item{$\delta_{obs}: X_{obs} \times \Sigma_{obs} \rightarrow X_{obs}$ is the transition function defined by: for any $x_{obs} \in X_{obs}$ and $\sigma_{obs} \in \Sigma_{obs}$, one has $\delta_{obs}(x_{obs}, \sigma_{obs})=UR(\{x' \in X|\exists x\in x_{obs}, \exists \sigma\in \Gamma: P_{str}(\sigma)=\gamma \land x' \in \delta(x, \sigma)\})$ if $\sigma_{obs}=\gamma$, $\delta_{obs}(x_{obs}, \sigma_{obs})=UR(\{x' \in X|\exists x\in x_{obs}: x' \in \delta(x, \sigma_{obs})\})$ otherwise, and}
\item{$X_{0,obs}= UR(X_0)$ is the set of initial states.}
\end{itemize}

In observer $Obs(G)$, the adjoint state set of state $x$ in terms of state $x_{obs}$ is defined by 
\begin{equation*}
As(x, x_{obs})=
\begin{cases}
UR(x_{obs}\backslash\{x\}),&\mbox{if~} x\in x_{obs},\\
\mbox{undefined},&\mbox{otherwise}.
\end{cases}
\end{equation*}
We can tabulate all adjoint states of each state in all $x_{obs}\in X_{obs}$.
Note that there may be more than one adjoint state set for state $x$ since $x$ may be contained in different states of $X_{obs}$.
Then, we construct a new observer of system $G$ called a \emph{modified observer}, defined by
\begin{equation}\label{eq:4}
Obs_m(G) = (X_{m}, \Sigma_{m}, \delta_{m}, X_{0,m}),
\end{equation}
where
\begin{itemize}
\item{$X_{m}= 2^{X}$ is a finite set of states,}
\item{$\Sigma_{m}=\Sigma_{obs}$ is a finite set of observable events,}
\item{$\delta_{m}: X_{m} \times \Sigma_m \rightarrow X_{m}$ is the transition function defined by: for any $x_{m} \in X_m$ and $\sigma_{m} \in \Sigma_{m}$, $\delta_{m}(x_{m}, \sigma_{m}) = UR(\{x' \in X|\exists x\in x_{m}, \exists \sigma\in \Gamma: P_{str}(\sigma)=\gamma \land x' \in \delta(x, \sigma)\})$ if $\sigma_m = \gamma$, $\delta_{m}(x_{m}, \sigma_{m}) = UR(\{x' \in X|\exists x\in x_{m}: x' \in \delta(x, \sigma_{m})\})$ otherwise, and}
\item{$X_{0,m}=\bigcup_{\mycom{x\in x_{obs}}{\emptyset\neq x_{obs} \in \Bar X_{obs}}} \{As(x, x_{obs})\}$ is the set of initial 

\vspace{0.3em}
states, where $\Bar X_{obs}$ is the set of reachable states in $Obs(G)$ (note that $\emptyset \in \Bar X_{obs}$ may hold.).}
\end{itemize}

Clearly, the only difference between $Obs(G)$ and $Obs_m(G)$ is the set of initial states.
Notation $Obs_m(G)$ characterizes the evolution of all non-empty adjoint state sets during all observational steps, which is critical for the subsequent decision on the existence of adjoint states within $K$ observational steps.

A variant of system $G$, denoted by $G_m$, is realized by letting all the states in $G$ be initial states.
Next, we introduce the main tool--concurrent composition.

\begin{definition}(Concurrent composition)\label{de:200}
Consider two automata $G^i=(X_i, \Sigma_i, \delta_i, X_{0,i})$, $i=1, 2$, where $G^1$ and $G^2$ are NFAs.
Let $\ell: \Sigma_{1} \rightarrow \Sigma_{2}\cup \{\epsilon\}$ be a projection mapping, where all events in $\Sigma_{2}$ are observable.
The concurrent composition of $G^1$ and $G^2$ is
\begin{equation}\label{eq:200}
Cc(G^1, G^2)=(X^c, \Sigma^c, \delta^c, X^{0,c}),
\end{equation}
where\footnote{``$Cc$'' is the abbreviation of ``concurrent composition''.}
\begin{itemize}
    \item $X^c= X_1 \times X_2$,
    \item $\Sigma^c= \Sigma^{o,c} \cup \Sigma^{uo,c}$, where $\Sigma^{o,c}=\{(\overline{\sigma}, \overline{\sigma}')| \overline{\sigma}\in \Sigma_{1o} \land \overline{\sigma}'\in \Sigma_{2}\land \ell(\overline{\sigma})=\overline{\sigma}'\}$, $\Sigma^{uo,c}=\{(\overline{\sigma}, \epsilon)| \overline{\sigma}\in \Sigma_{1uo}\}$, and $\Sigma_{1o}$ (resp., $\Sigma_{1uo}$) is the set of observable (resp., unobservable) events of $\Sigma_1$,
    \item for all $(x_1, x_2)\in X^c$, $(\overline{\sigma}, \overline{\sigma}') \in \Sigma^{o,c}$, and $(\overline{\sigma}, \epsilon)\in \Sigma^{uo,c}$,
        \begin{itemize}
        \item [1)] $(x'_1, x'_2)\in \delta^c((x_1, x_2), (\overline{\sigma}, \overline{\sigma}'))$ if and only if $x'_1\in \delta_1(x_1, \overline{\sigma})$ and $x'_2=\delta_2(x_2, \overline{\sigma}')$,
        \item [2)] $(x'_1, x'_2)\in \delta^c((x_1, x_2), (\overline{\sigma}, \epsilon))$ if and only if $x'_1\in \delta_1(x_1, \overline{\sigma})$ and $x'_2=x_2$,
        \end{itemize}
    \item $X^{0,c}\subseteq X_{0,1}\times X_{0,2}$ is the set of initial states and will be specified when defining (\ref{eq:5}) and (\ref{eq:9}).$\hfill\Diamond$
\end{itemize}
\end{definition}

To verify $K$- and infinite-step strong anonymity, we construct the concurrent composition of $G_m$ and $Obs_m(G)$ by considering projection mapping $\ell$ as the strong anonymous projection $P_{str}$.
The constructed concurrent composition is accordingly denoted by
\begin{equation}\label{eq:5}
Cc(G_m, Obs_m(G)) = (X_c, \Sigma_c, \delta_c, X_{0,c}),
\end{equation}
where $G_m$ is the variant automaton for $G$ by letting all the states in $G$ be initial states, $Obs_m(G)$ is the modified observer as defined in (\ref{eq:4}) (Note that $Obs_m(G)$ is not an observer of $G_m$.),
and the set of initial states $X_{0,c}=\bigcup_{\mycom{x\in x_{obs}}{\emptyset\ne x_{obs} \in \Bar X_{obs}}} \{(x, As(x, x_{obs}))\}$, where $\Bar X_{obs}$ is the set of reachable states in $Obs(G)$.
A state $x_c$ in $Cc(G_m, Obs_m(G))$ takes the form of $x_c=(x, q)$, where the left component $x$ stands for the actual state of the system $G$, while the right component $q$ represents the corresponding adjoint state set of the left component $x$.
For any string $s_c\in \mathcal{L}(Cc(G_m, Obs_m(G)))$,
the left and right components of $s_c$ are denoted as $L(s_c)$ and $R(s_c)$, respectively.
By the construction of $Cc(G_m, Obs_m(G))$, we find $P_{str}(L(s_c)) = R(s_c)$.
Next, we illustrate the construction of $Cc(G_m, Obs_m(G))$ for system $G$.

\begin{example}\label{ex:3.4}
Consider again the system $G$ in Fig. \ref{fig:3.1}.
The observer $Obs(G)$ is shown in Fig. \ref{fig:4.1}.
From the constructed $Obs(G)$, we obtain the states and their corresponding adjoint state sets as shown in Table \ref{tab:4.1}.
Further, the modified observer $Obs_m(G)$ and the variant automaton $G_m$ are shown in Figs. \ref{fig:4.2} and \ref{fig:4.3}, respectively.
The concurrent composition $Cc(G_m, Obs_m(G))$ is constructed as shown in Fig. \ref{fig:4.4}.
$\hfill\square$
\end{example}

\begin{figure}[htp]
\centering
\includegraphics[width=2.1in]{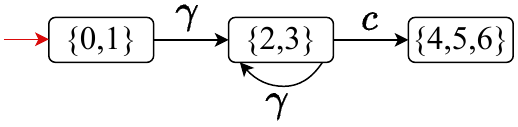}
\caption{Part of the observer $Obs(G)$ for $G$.}
\label{fig:4.1}
\end{figure}

\begin{table}[htp]
\renewcommand{\arraystretch}{1.2}
\caption{States and the corresponding adjoint state sets}
\label{tab:4.1}
\centering
\resizebox{5.2cm}{!}{
\begin{tabular}{ccc}
\toprule  
$x_{obs}$&state&adjoint state set\\
\midrule 
\multirow{2}{*}{\{0,1\}}&0&\{1\} \\ 
&1&\{0\} \\
\midrule
\multirow{2}{*}{\{2,3\}}&2&\{3\} \\ 
&3&\{2\} \\
\midrule
\multirow{3}{*}{\{4,5,6\}}&4&\{5,6\} \\ 
&5&\{4,6\} \\
&6&\{4,5,6\} \\
\bottomrule 
\end{tabular}}
\end{table}

\begin{figure}[htp]
\centering
\includegraphics[width=2.0in]{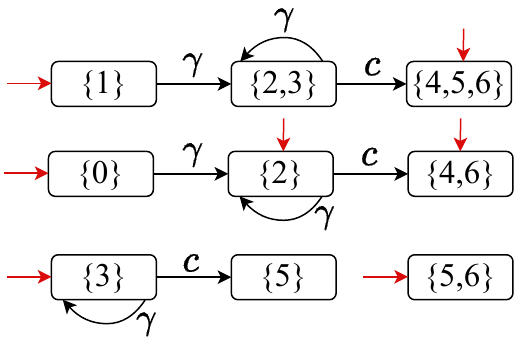}
\caption{Part of the modified observer $Obs_m(G)$.}
\label{fig:4.2}
\end{figure}

\begin{figure}[htp]
\centering
\includegraphics[width=1.7in]{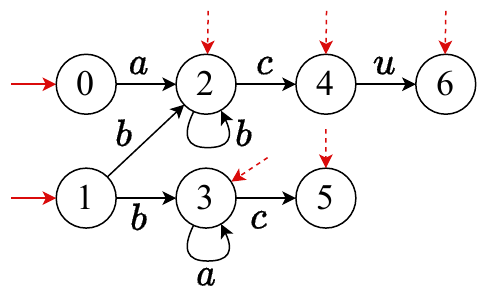}
\caption{The variant automaton $G_m$ for $G$.}
\label{fig:4.3}
\end{figure}

\begin{figure}[!ht]
\centering
\includegraphics[width=3.2in]{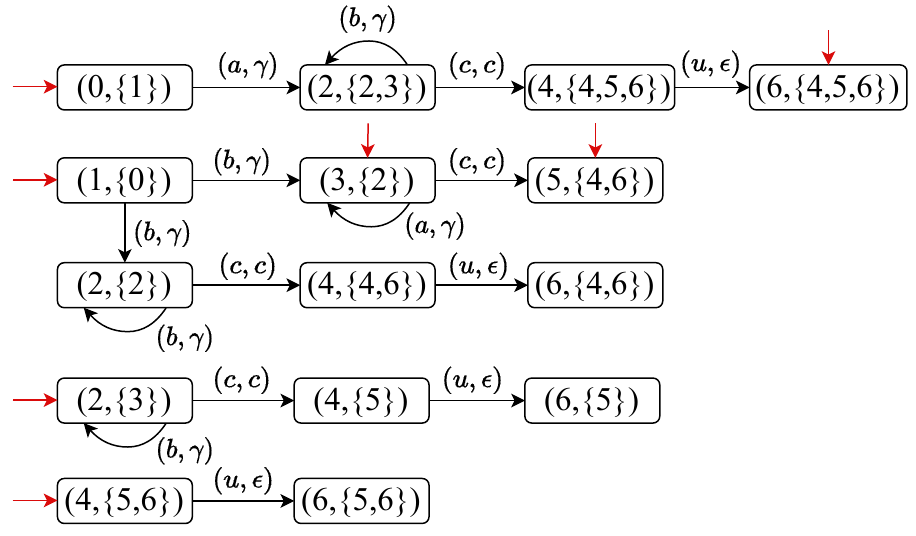}
\caption{Part of the concurrent composition $Cc(G_m,Obs_m(G))$.}
\label{fig:4.4}
\end{figure}

Now we present the main results of verifying $K$- and infinite-step strong anonymity using the proposed concurrent composition $Cc(G_m, Obs_m(G))$.

\begin{theorem}\label{th:4.1}
Given a system $G=(X, \Sigma, \delta, X_0)$, let $Cc(G_m, Obs_m(G))  = (X_c, \Sigma_c, \delta_c, X_{0,c})$ be the corresponding concurrent composition.
$G$ is infinite-step (resp., $K$-step) strongly anonymous w.r.t. $\Sigma_o$, $\Gamma$, and $P_{str}$ if and only if there exists no state of the form $(\cdot, \emptyset)$ in $Cc(G_m, Obs_m(G))$ that is reachable from $X_{0,c}$ (resp., within $K$ observational steps).
\end{theorem}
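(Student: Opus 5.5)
The plan is to prove both directions through a single characterization lemma for reachable states of $Cc(G_m, Obs_m(G))$. The lemma will say that $(x,q)$ is reachable from $X_{0,c}$ via a string $s_c$ with $|R(s_c)|=k$ if and only if there exist $\alpha,\beta$ with $|\beta|=k$ and $\alpha\beta\in P_{str}(\mathcal{L}(G))$, and a state $x'\in \widehat{E}_{G}(P_{str},\alpha|\alpha)$, such that two facts hold. First, $x$ is reached from $x'$ by some $t$ with $P_{str}(t)=\beta$. Second, $q$ is exactly the set of states reachable, consistently with $\beta$ and under $UR$ closures, from the adjoint set $As(x',x_{obs})$, where $x_{obs}=\widehat{E}_G(P_{str},\alpha|\alpha)$ is the observer state reached by $\alpha$.

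The underlying fact is that $\delta_m$ is the observer update. Hence $q=\delta_m(As(x',x_{obs}),\beta)$ collects precisely the endpoints of runs that start at a state of $x_{obs}$ different from $x'$ (or unobservably reachable from one) and produce observation $\beta$. This follows by induction on $|\beta|$ from the definition of $\delta_m$ and $UR$, exactly as for the standard observer. For the left component, we use that $G_m$ lets every state be initial, together with $P_{str}(L(s_c))=R(s_c)$. Concretely, the left component tracks a genuine continuation $t$ of $G$ from $x'$, and $x'$ is itself reachable in $G$ with observation $\alpha$ because $x'\in x_{obs}\in\bar X_{obs}$.

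With the lemma in hand, the key equivalence is the following: $q=\emptyset$ if and only if no state of $x_{obs}$ other than $x'$ admits a continuation with observation $\beta$. Under the right reading of $UR(x_{obs}\setminus\{x'\})$, this is the same as $\widehat{E}_G(P_{str},\alpha|\alpha\beta)=\{x'\}$, i.e.\ $|\widehat{E}_G(P_{str},\alpha|\alpha\beta)|=1$. Note that $\widehat{E}$ is nonempty because $x'$ survives via $t$.

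For the \emph{only if} direction of the theorem, we argue by contraposition. A reachable $(\cdot,\emptyset)$ within $K$ observational steps yields $\alpha,\beta$ with $|\beta|\le K$ and $x'$ such that the delayed estimate is the singleton $\{x'\}$. The corresponding string $st\in\mathcal{L}(G)$ then violates Definition \ref{def:3.1}, and in the unbounded case it violates Definition \ref{def:3.2}.

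For the \emph{if} direction, also by contraposition, suppose some $st\in\mathcal{L}(G)$ has $|\widehat{E}_G(P_{str},\alpha|\alpha\beta)|=1$, say the estimate is $\{x'\}$. Let $x$ be the state reached from $x'$ via $t$. Then $(x',As(x',x_{obs}))\in X_{0,c}$, and running $t$ paired with $P_{str}(t)=\beta$ reaches $(x,\emptyset)$ in $|\beta|$ observational steps.

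The main obstacle is the interplay between $UR$ and the adjoint set. $As(x,x_{obs})=UR(x_{obs}\setminus\{x\})$ may re-include $x$ itself, or states unobservably reachable from $x$, when $x$ is unobservably reachable from another state of $x_{obs}$; Table \ref{tab:4.1} shows this for state $6$. I will handle it by proving that $q=\emptyset$ exactly when every run compatible with $\alpha\beta$ passes through $x'$ at the instant $\alpha$ is completed. The idea is to interpret $\widehat{E}$ as the set of states occupied at that instant along compatible runs: if another state $x''\neq x'$ of $x_{obs}$ had a continuation consistent with $\beta$, then $x''\in\widehat{E}$. Conversely, $x''\in\widehat{E}$ with $x''\ne x'$ gives $x''\in As(x',x_{obs})$ and hence a nonempty $q$. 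I also need to check the edge case $\beta=\epsilon$, where $q=As(x',x_{obs})$ is empty iff $x_{obs}=\{x'\}$, consistent with the definition.
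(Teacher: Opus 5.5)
Your proposal is correct and follows essentially the same route as the paper. Both directions go by contraposition, matching a singleton delayed estimate $\{x'\}$ with an initial pair $(x',As(x',x_{obs}))$ whose $\delta_m$-image under $\beta$ is empty. Your explicit handling of the case $x'\in As(x',x_{obs})$ supplies the step the paper only attributes to ``the construction of $Obs_m(G)$'': you pull any witness in $UR(x_{obs}\setminus\{x'\})$ back to some $x''\neq x'$ of $x_{obs}$. This shows $\delta_m(As(x',x_{obs}),\beta)=\emptyset$ exactly when no $x''\in x_{obs}\setminus\{x'\}$ has a continuation observed as $\beta$.
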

\begin{proof}
\rm{First, we show the case of $K$-step strong anonymity in Theorem \ref{th:4.1}.\\
(if) By contrapositive, suppose that system $G$ is not $K$-step strongly anonymous w.r.t. $\Sigma_o$, $\Gamma$, and $P_{str}$.
By Definition \ref{def:3.1}, there exists a string $st\in \mathcal{L}(G)$ such that $|\widehat{E}_{G}(P_{str}, \alpha|\alpha\beta)| > 1$ does not hold, where $P_{str}(s) = \alpha$, $P_{str}(st) = \alpha\beta$, and $|\beta| \leq K$.
Then $|\widehat{E}_{G}(P_{str}, \alpha|\alpha\beta)| = 1$.  Denote $\widehat{E}_{G}(P_{str}, \alpha|\alpha\beta) = \{x_c\}$.
By the construction of $Obs(G)$, $x_c\in x_{obs}$ where $x_{obs}=\delta_{obs}(X_{0,obs}, \alpha)$.
By the construction of $Obs_m(G)$, $\delta_m(As(x_c, x_{obs}), \beta)=\emptyset$.
By the construction of $Cc(G_m, Obs_m(G))$, there exists a string $s_c\in \mathcal{L}(Cc(G_m, Obs_m(G)))$ with $R(s_c)=\beta$ such that $(x_{c,k} ,\emptyset)\in \delta_c((x_c, As(x_c, x_{obs})), s_c)$, where $x_{c,k}\in \delta(x_c, L(s_c))$.
One concludes that $(x_{c, k}, \emptyset)$ is reachable from $X_{0,c}$ within $K$ observational steps due to $|P_{str}(L(s_c))| = |R(s_c)| \leq K$.\\
(only if) Still by contrapositive, suppose that there exists a state of the form $(\cdot, \emptyset)$ in $Cc(G_m, Obs_m(G))$ that is reachable from $X_{0,c}$ within $K$ observational steps.
By the construction of $Cc(G_m,Obs_m(G))$, there exists a state $(x_{c,k}, \emptyset)\in X_c$ reached from some initial state $(x_c, As(x_c, x_{obs}))\in X_{0,c}$ and a string $s_c\in \mathcal{L}(Cc(G_m, Obs_m(G)))$ such that $|R(s_c)|\leq K$, $x_c\in x_{obs}$, $x_{c,k}\in \delta(x_c, L(s_c))$, and $(x_{c,k}, \emptyset)\in \delta_c((x_c, As(x_c, x_{obs})), s_c)$.
By the construction of $Obs_m(G)$, $\delta_m(As(x_c, x_{obs}), R(s_c))=\emptyset$.
Further, $\delta(x'_c, t)=\emptyset$ holds for all $x'_c\in As(x_c, x_{obs})$ and all $t\in \Sigma^*$ with $P_{str}(t)=R(s_c)$.
Then, by the construction of $Obs(G)$, one concludes that there exists an initial state $x_0\in X_0$ and a string $s\in \mathcal{L}(G, x_0)$ such that $x_c \in \delta(x_0, s)$ and $\delta_{obs}(X_{0,obs}, P_{str}(s))=x_{obs}$ owing to $x_c\in x_{obs}$.
Let $P_{str}(s)=\alpha$ and $P_{str}(t)=\beta$.
Accordingly, $x_c$ is the unique state in $\widehat{E}_{G}(P_{str}, \alpha|\alpha\beta)$, i.e., $|\widehat{E}_{G}(P_{str}, \alpha|\alpha\beta)|=1$.
By Definition \ref{def:3.1}, system $G$ does not satisfy $K$-step strong anonymity w.r.t. $\Sigma_o$, $\Gamma$, and $P_{str}$.

The proof for the case of infinite-step strong anonymity is very similar to the mentioned case, we omit the details here.}
\end{proof}

By Theorem \ref{th:4.1}, we summarize the verification procedures for $K$- and infinite-step strong anonymity as shown in Algorithm \ref{al:1}.
In Algorithm \ref{al:1}, the time costs of computing $Obs(G)$, $Obs_m(G)$, $G_m$, and $Cc(G_m, Obs_m(G))$ are $\mathcal{O}(|\Sigma_o||\Sigma_{uo}||X|^2 2^{|X|})$, $\mathcal{O}(|\Sigma_o||\Sigma_{uo}||X|^2 2^{|X|})$, $\mathcal{O}(|\Sigma||X|^2)$, and $\mathcal{O}(|\Sigma||X|^2 2^{|X|})$, respectively.
Hence, the overall time complexity for verifying $K$- and infinite-step strong anonymity using Algorithm \ref{al:1} is $\mathcal{O}( (|\Sigma_o||\Sigma_{uo}|+|\Sigma|) |X|^2 2^{|X|})$.
Note that the two-way observer method proposed in \cite{Yin2017} can also verify $K$- and infinite-step strong anonymity by making slight modifications, but its complexity is higher than ours.

\begin{algorithm}[!htbp]
  \DontPrintSemicolon
  \caption{Verification of Infinite-Step (resp., $K$-Step) Strong Anonymity}\label{al:1}
  \KwIn{System $G=(X, \Sigma, \delta, X_0), \Sigma_o, \Sigma_{uo}, \Gamma \subseteq \Sigma_o$, and $K\in \mathbb{N}$.}
  \KwOut{``YES" if $G$ satisfies infinite-step (resp., $K$-step) strong anonymity w.r.t. $\Sigma_o$, $\Gamma$, and $P_{str}$, ``NO" otherwise.}
  Construct the observer $Obs(G)$ for $G$;\;
  \eIf{there exists a reachable state $x_{obs}\in X_{obs}$ such that $|x_{obs}|=1$,}
  {\textbf{return} ``No'';\;
	\textbf{stop};\;
  }{
	Construct modified observer $Obs_m(G)$ for $G$;\;
	Construct $G_m$ as a variant of $G$;\;
	Construct $Cc(G_m, Obs_m(G))$ as the concurrent composition of $G_m$ and $Obs_m(G)$;\;
	Use the ``Breadth-First Search Algorithm” in \cite{Cormen2022} to decide whether there exists a state of the form $(\cdot,\emptyset)$ in $Cc(G_m, Obs_m(G))$ that is reachable from $X_{0,c}$ (resp., within $K$ observational steps);\;
	\eIf{there exists such a state $(\cdot,\emptyset)$ in $Cc(G_m, Obs_m(G))$ that is reachable from $X_{0,c}$ (resp., within $K$ observational steps),}
	{\textbf{return} ``No'';\;
	\textbf{stop};\;
	}
	{\textbf{return} ``Yes'';\;
	\textbf{stop}.
	}
  }
\end{algorithm}


\begin{example}
Reconsider the system $G$ in Fig. \ref{fig:3.1}. 
The corresponding concurrent composition $Cc(G_m, Obs_m(G))$ is shown in Fig. \ref{fig:4.4} in which there exists no reachable state of the form $(\cdot, \emptyset)$, concluding that $G$ satisfies infinite-step strong anonymity.
It also satisfies $K$-step strong anonymity for any value of $K$.
The verification results are consistent with the conclusions in Example \ref{ex:3.1}.
$\hfill\square$
\end{example}


\subsection{Verification of K- and infinite-step weak anonymity}
As the weak anonymous projection $P_w$ is a class of dynamic observation functions, the verification for $K$- and infinite-step anonymity under the weak anonymous projection $P_w$ is more complicated than their verification under the strong anonymous projection $P_{str}$.
In other words, the previous $Cc(G_m, Obs_m(G))$ cannot be used to verify $K$- and infinite-step weak anonymity.
To this end, we need an augmented automaton to express historical information about the occurrence of anonymous events along with their corresponding location indices.
Some symbols are introduced before constructing the augmented automaton of the system $G$.
For the anonymous event set $\Gamma$, we define its augmented set as $\widehat{\Gamma}=\bigcup_{\sigma\in \Gamma} \{\sigma_i| i=1, \dots, |\Gamma|\}$.
For example, one has $\widehat{\Gamma}=\{a_1,a_2,b_1,b_2\}$ if $\Gamma=\{a,b\}$.
For system $G$, its augmented system is defined by
\begin{equation}\label{eq:6}
\widehat{G} = (\widehat{X}, \widehat{\Sigma}, \widehat{\delta}, \widehat{X}_0),
\end{equation}
where
\begin{itemize}
\item{$\widehat{X}= (X \times \{\epsilon\}) \cup \bigcup_{i=1}^{|\Gamma|} (X \times \{\{(\sigma^1)_1, \dots (\sigma^i)_i\}| \sigma^1, \dots,\\ \sigma^i\in \Gamma \mbox{ are distinct}\})$ is a finite set of states,} 
\item{$\widehat{\Sigma}=\widehat{\Gamma}\cup (\Sigma_o\backslash \Gamma)\cup \Sigma_{uo}$ is a finite set of events, where $\widehat{\Sigma}=\widehat{\Sigma}_{o}\dot\cup\widehat{\Sigma}_{uo}$, $\widehat{\Sigma}_{o}=\widehat{\Gamma}\cup (\Sigma_o\backslash \Gamma)$, and $\widehat{\Sigma}_{uo}=\Sigma_{uo}$,}
\item{$\widehat{\delta}: \widehat{X} \times \widehat{\Sigma} \rightarrow 2^{\widehat{X}}$ is defined by: \\
(1) for any state $(x, p)\in \widehat{X}$, $e\in (\Sigma_o\backslash \Gamma)\cup \Sigma_{uo}$, let $\widehat{\delta}((x, p), e)=\{(x', p)| x'\in \delta(x, e)\}$, \\
(2) for all $1\leq i \leq |\Gamma|$, $1\leq j \leq i$,
for all $(x, \{(\sigma^1)_1, \dots, (\sigma^i)_i\})\in \widehat{X}$, $(\sigma^j)_j\in \widehat{\Gamma}$, 
let $\widehat{\delta}((x, \{(\sigma^1)_1, \dots,(\sigma^i)_i\}), (\sigma^j)_j)=\{(x', \{(\sigma^1)_1, \dots, (\sigma^i)_i\})| x'\in \delta(x, \sigma^j)\}$,\\
(3) for all $0\leq i \leq |\Gamma|$, 
for all $(x, \{(\sigma^1)_1, \dots, (\sigma^i)_i\})\in \widehat{X}$, $(\sigma^{i+1})_{i+1}\in \widehat{\Gamma}$ with $\sigma^1, \dots, \sigma^{i+1}$ being distinct,
let $\widehat{\delta}((x, \{(\sigma^1)_1, \dots, (\sigma^{i})_{i}\}), (\sigma^{i+1})_{i+1}) = \{(x', \{(\sigma^1)_1, \dots, (\sigma^i)_i, (\sigma^{i+1})_{i+1}\})| x'\in \delta(x, \sigma^{i+1})\}$,\\
(4) $\widehat{\delta}$ is not defined at any other pair in $\widehat{X} \times \widehat{\Sigma}$, and
}
\item{$\widehat{X}_0=X_0 \times \{\epsilon\}$ is a finite set of initial states.}
\end{itemize}

We define a projection mapping $P^m: \widehat{\Gamma}\rightarrow \Gamma_w$ as
$P^m(\sigma_i)=\gamma_i$ for all $\sigma_i\in \widehat{\Gamma}$ with $1\leq i\leq |\Gamma|$ to replace an event $\sigma_i$ in $\widehat{\Gamma}$ by $\gamma_i$, where $\gamma_i\in \Gamma_w$ and $\Gamma_w=\{\gamma_1, \gamma_2, \dots, \gamma_{|\Gamma|}\}$.
To describe the language generated by augmented automaton $\widehat{G}$ under the weak anonymous projection $P_w$, define a projection mapping $P^n: \widehat{\Sigma}^* \rightarrow (\Gamma_w\cup(\Sigma_o\backslash\Gamma))^*$ as $P^n(\epsilon)=\epsilon$ and
\begin{equation*}
P^n(s\sigma)=
\begin{cases}
P^n(s)P^m(\sigma),&\mbox{if }\sigma\in \widehat{\Gamma},\\
P^n(s)\sigma,&\mbox{if }\sigma\in (\Sigma_o\backslash \Gamma),\\
P^n(s),&\mbox{otherwise}.
\end{cases}
\end{equation*}
where $s\in \widehat{\Sigma}^* \ \mbox{and} \ \sigma\in \widehat{\Sigma}$.
Trivially, $P^n(\mathcal{L}(\widehat{G}))=P_w(\mathcal{L}(G))$ holds.
In augmented automaton $\widehat{G}$, the unobservable reach of a state set $Q\subseteq \widehat{X}$ is defined as
\begin{equation*}
UR_w(Q)=\{q'\in \widehat{X}|\exists q\in Q, \exists s\in \Sigma_{uo}^*: q'\in \widehat{\delta}(q, s)\}.
\end{equation*}
Then, the observer for $\widehat{G}$ is constructed as a DFA, denoted by
\begin{equation}\label{eq:7}
\widehat{Obs}(\widehat{G}) = (\widehat{X}_{obs}, \widehat{\Sigma}_{obs}, \widehat{\delta}_{obs}, \widehat{X}_{0,obs}),
\end{equation}
where
\begin{itemize}
\item{$\widehat{X}_{obs}= 2^{\widehat{X}}$ is a finite set of states,}
\item{$\widehat{\Sigma}_{obs}=\Gamma_w\cup (\Sigma_{o}\backslash \Gamma)$ is a finite set of observable events,} 
\item{$\widehat{\delta}_{obs}: \widehat{X}_{obs} \times \widehat{\Sigma}_{obs} \rightarrow \widehat{X}_{obs}$ is the transition function defined by: for any $\widehat{x}_{obs} \in \widehat{X}_{obs}$ and $\widehat{\sigma} \in \widehat{\Sigma}_{obs}$, one has $\widehat{\delta}_{obs}(\widehat{x}_{obs}, \widehat{\sigma})=UR_w(\{\widehat{x}' \in \widehat{X}|\exists \widehat{x} \in \widehat{x}_{obs}: \widehat{x}' \in \widehat{\delta}(\widehat{x}, \widehat{\sigma})\})$ if $\widehat{\sigma}\in (\Sigma_{o}\backslash \Gamma)$; $\widehat{\delta}_{obs}(\widehat{x}_{obs}, \widehat{\sigma})=UR_w(\{\widehat{x}' \in \widehat{X}|\exists \widehat{x}\in \widehat{x}_{obs}, \exists \widehat{\sigma}' \in \widehat{\Gamma}: \widehat{x}' \in \widehat{\delta}(\widehat{x}, \widehat{\sigma}') \land P^m(\widehat{\sigma}') = \widehat{\sigma}\})$ otherwise, and}
\item{$\widehat{X}_{0,obs}= UR_w(\widehat{X}_0)$ is the set of initial states.}
\end{itemize}

In order to describe the state $x\in X$ involved in $\widehat{x}_{obs}\in \widehat{X}_{obs}$, let $m(\widehat{x}_{obs})=\{x\in X| (x, \cdot)\in \widehat{x}_{obs}\}$.
For any state $\widehat{x}=(x,\cdot)\in \widehat{x}_{obs}$, we use its left part to denote its corresponding original state $x$, denoted by $l(\widehat{x})=x$.
Write $n(\widehat{x}, \widehat{x}_{obs})=\{(x,\cdot)\in \widehat{x}_{obs}|x= l(\widehat{x})\}$ as the set of states in $\widehat{x}_{obs}$ that corresponding to the same original state $x$.
The adjoint state set of state $\widehat{x}$ in terms of state $\widehat{x}_{obs}$ in observer $\widehat{Obs}(\widehat{G})$ is defined as
\begin{equation*}
AS_w(\widehat{x}, \widehat{x}_{obs})=
\begin{cases}
UR_w(\widehat{x}_{obs}\backslash n(\widehat{x}, \widehat{x}_{obs})),&\mbox{if~} \widehat{x}\in \widehat{x}_{obs}\land\\
&|m(\widehat{x}_{obs})|>1,\\
\emptyset, &\mbox{if~} \widehat{x}\in \widehat{x}_{obs}\land \\
&|m(\widehat{x}_{obs})|=1,\\
\mbox{undefined},&\mbox{otherwise}.
\end{cases}
\end{equation*}
Also, a variant of $\widehat{Obs}(\widehat{G})$ is derived by changing the set of initial states, defined by
\begin{equation}\label{eq:8}
Obs_w(\widehat{G}) = (X_{w}, \Sigma_{w}, \delta_{w}, X_{0,w}),
\end{equation}
where
\begin{itemize}
\item{$X_{w}= 2^{\widehat{X}}$ is a finite set of states,}
\item{$\Sigma_{w}=\widehat{\Sigma}_{obs}$ is a finite set of observable events,}
\item{$\delta_{w}: X_{w} \times \Sigma_w \rightarrow X_{w}$ is the transition function defined by: for any $x_{w} \in X_w$ and $\sigma_{w} \in \Sigma_{w}$, one has $\delta_{w}(x_{w}, \sigma_{w}) = UR_w(\{\widehat{x}' \in \widehat{X}|\exists \widehat{x} \in x_{w}: \widehat{x}' \in \widehat{\delta}(\widehat{x}, \sigma_w)\})$ if $\sigma_w \in (\Sigma_{o}\backslash \Gamma)$; $\delta_{w}(x_{w}, \sigma_{w})=UR_w(\{\widehat{x}' \in \widehat{X}|\exists \widehat{x}\in x_w, \exists \widehat{\sigma}' \in \widehat{\Gamma}: \widehat{x}' \in \widehat{\delta}(\widehat{x}, \widehat{\sigma}') \land P^m(\widehat{\sigma}') = \sigma_w\})$ otherwise, and}
\item{$X_{0,w}=\bigcup_{\mycom{\hat{x}\in \hat{x}_{obs}} {\emptyset\neq \hat{x}_{obs} \in \widetilde{X}_{obs}}} \{AS_w(\widehat{x}, \widehat{x}_{obs})\}$ is the initial state

\vspace{0.3em}
sets, where $\widetilde{X}_{obs}$ is the set of reachable states in $\widehat{Obs}(\widehat{G})$.}
\end{itemize}

Let all the states in $\widehat{G}$ be initial states, which enables us to get a new NFA $\widehat{G}_w$.
Now, the concurrent composition of $\widehat{G}_w$ and $Obs_w(\widehat{G})$ is constructed by considering projection mapping $\ell$ as the projection mapping $P^n$.
The constructed concurrent composition is represented by
\begin{equation}\label{eq:9}
Cc(\widehat{G}_w, Obs_w(\widehat{G})) = (X_c^w, \Sigma_c^w, \delta_c^w, X_{0,c}^w),
\end{equation}
where $\widehat{G}_w$ is the variant automaton for $\widehat{G}$ by letting all states in $\widehat{G}$ be the initial states of $\widehat{G}_w$, $Obs_w(\widehat{G})$ is the variant observer of $\widehat{Obs}(\widehat{G})$ as defined in (\ref{eq:8}) ($Obs_w(\widehat{G})$ is not an observer of $\widehat{G}_w$), and the set of initial states $X_{0,c}^w=\bigcup_{\mycom{\hat{x}\in \hat{x}_{obs}} {\emptyset\neq \hat{x}_{obs} \in \widetilde{X}_{obs}}} \{(\widehat{x}, AS_w(\widehat{x}, \widehat{x}_{obs}))\}$, where $\widetilde{X}_{obs}$ is the set of reachable states in $\widehat{Obs}(\widehat{G})$.
For any state $x_c^w\in X_c^w$, the left component of $x_c^w$ tracks the states in $\widehat{G}$ while the right component of $x_c^w$ reflects the evolution of corresponding adjoint state set.
For any string $s_c\in \mathcal{L}(Cc(\widehat{G}_w, Obs_w(\widehat{G})))$, $L(s_c)$ and $R(s_c)$ are used to denote the left and right components of $s_c$, respectively.
Thus, we have $P^n(L(s_c)) = R(s_c)$.
Now, we use an example to show the construction of the concurrent composition $Cc(\widehat{G}_w, Obs_w(\widehat{G}))$.

\begin{example}
\rm{Consider again the system $G$ in Fig. \ref{fig:3.1}.
$\widehat{G}$ and its observer $\widehat{Obs}(\widehat{G})$ are shown in Figs. \ref{fig:4.5} and \ref{fig:4.6}, respectively\footnote{Note that we denote the state $(x, \widetilde{\sigma})\in \widehat{X}$ by $x \widetilde{\sigma}$ for the sake of convenience, where $\widetilde{\sigma}\in \{\epsilon\}\cup \bigcup_{i=1}^{|\Gamma|} \{\{(\sigma^1)_1, \dots, (\sigma^i)_i\}| \sigma^1, \dots, \sigma^i\in \Gamma \mbox{ are distinct}\}$. 
For example, states $(0, \epsilon)$ and $(2, a_1)$ are denoted by $0\epsilon$ and $2a_1$, respectively.}.
By the constructed $\widehat{Obs}(\widehat{G})$, the table of states and their corresponding adjoint state sets are shown in Table \ref{tab:4.2}.
The variant observer $Obs_w(\widehat{G})$ and variant automaton $\widehat{G}_w$ are depicted in Figs. \ref{fig:4.7} and \ref{fig:4.8}, respectively.
Then, the concurrent composition $Cc(\widehat{G}_w, Obs_w(\widehat{G}))$ is constructed in Fig. \ref{fig:4.9}.
}
$\hfill\square$
\end{example}

\begin{figure}[htp]
\centering
\includegraphics[width=2.8in]{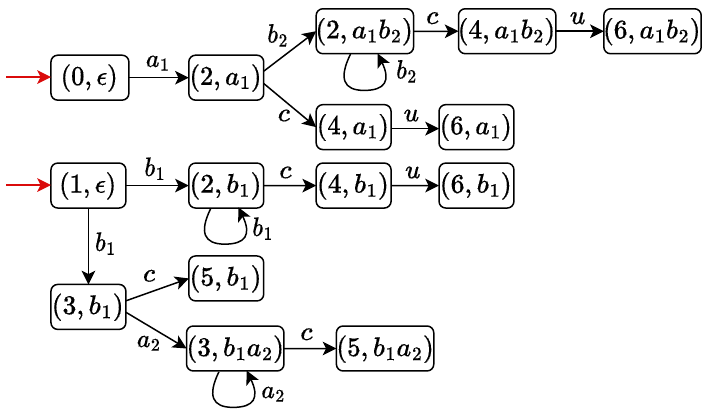}
\caption{The augmented automaton $\widehat{G}$ for $G$.}
\label{fig:4.5}
\end{figure}

\begin{figure}[htp]
\centering
\includegraphics[width=2.5in]{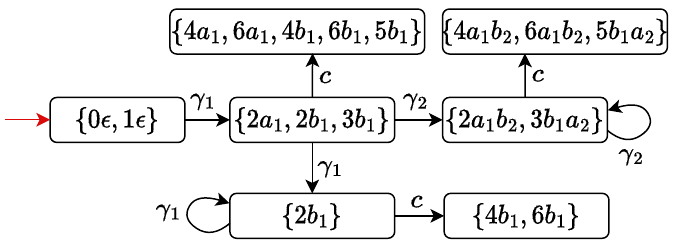}
\caption{Part of the observer $\widehat{Obs}(\widehat{G})$ for $\widehat{G}$.}
\label{fig:4.6}
\end{figure}

\begin{table}[htbp]
\renewcommand{\arraystretch}{1.2}
\caption{States and the corresponding adjoint state sets}
\label{tab:4.2}
\centering
\resizebox{7.2cm}{!}{
\begin{tabular}{ccc}
\toprule  
$x_{obs}$ & state & adjoint state set\\
\midrule
\multirow{2}{*}{$\{0\epsilon, 1\epsilon\}$} &$0\epsilon$ & $\{1\epsilon\}$ \\
&$1\epsilon$ & $\{0\epsilon\}$ \\
\midrule
\multirow{3}{*}{$\{2a_1, 2b_1, 3b_1\}$} & $2a_1$ & $\{3b_1\}$ \\
& $2b_1$ & $\{3b_1\}$ \\
& $3b_1$ & $\{2a_1, 2b_1\}$ \\
\midrule
$\{2b_1\}$ & $2b_1$ & $\emptyset$ \\
\midrule
\multirow{2}{*}{$\{4b_1,6b_1\}$} & $4b_1$ & $\{6b_1\}$ \\
& $6b_1$ & $\{4b_1, 6b_1\}$ \\
\midrule
\multirow{2}{*}{$\{2a_1b_2,3b_1a_2\}$} & $2a_1b_2$ & $\{3b_1a_2\}$ \\
& $3b_1a_2$ & $\{2a_1b_2\}$ \\
\midrule
\multirow{3}{*}{$\{4a_1b_2, 6a_1b_2, 5b_1a_2\}$} & $4a_1b_2$ & $\{6a_1b_2, 5b_1a_2\}$ \\
& $6a_1b_2$ & $\{4a_1b_2, 6a_1b_2, 5b_1a_2\}$ \\
& $5b_1a_2$ & $\{4a_1b_2, 6a_1b_2\}$ \\
\midrule
\multirow{5}{*}{$\{4a_1, 6a_1, 4b_1, 6b_1, 5b_1\}$} & $4a_1$ & $\{6a_1, 6b_1,5b_1\}$ \\
& $6a_1$ & $\{4a_1, 6a_1, 4b_1, 6b_1, 5b_1\}$ \\
& $4b_1$ & $\{6a_1, 6b_1, 5b_1\}$ \\
& $6b_1$ & $\{4a_1, 6a_1, 4b_1, 6b_1, 5b_1\}$ \\
& $5b_1$ & $\{4a_1, 6a_1, 4b_1, 6b_1\}$ \\
\bottomrule
\end{tabular}}
\end{table}

\begin{figure}[htp]
\centering
\includegraphics[width=3.3in]{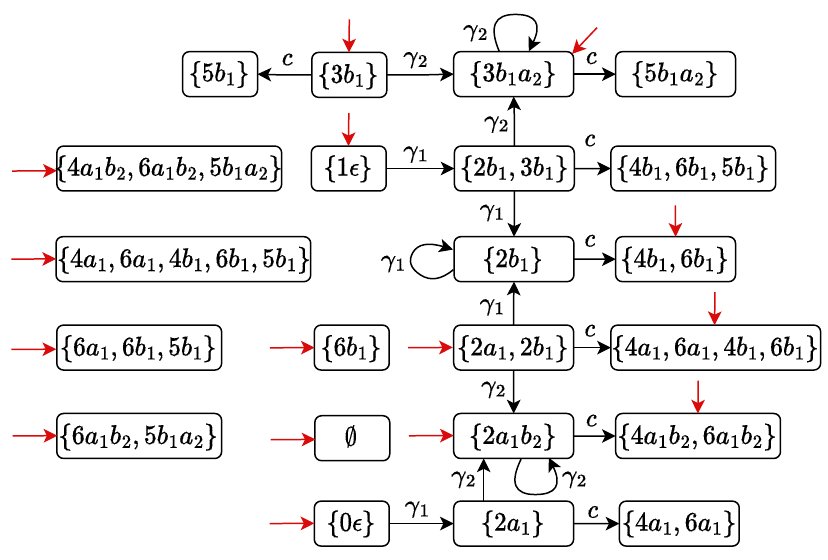}
\caption{Part of the variant observer $Obs_w(\widehat{G})$.}
\label{fig:4.7}
\end{figure}

\begin{figure}[htp]
\centering
\includegraphics[width=2.6in]{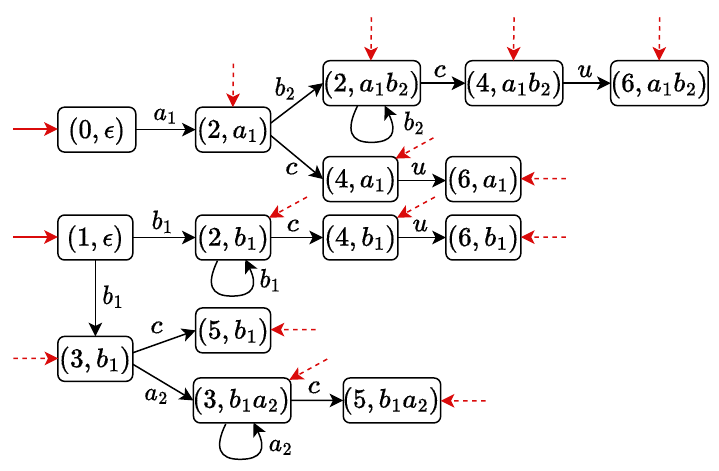}
\caption{The variant automaton $\widehat{G}_w$ for $\widehat{G}$.}
\label{fig:4.8}
\end{figure}

\begin{figure}[htp]
\centering
\includegraphics[width=3.1in]{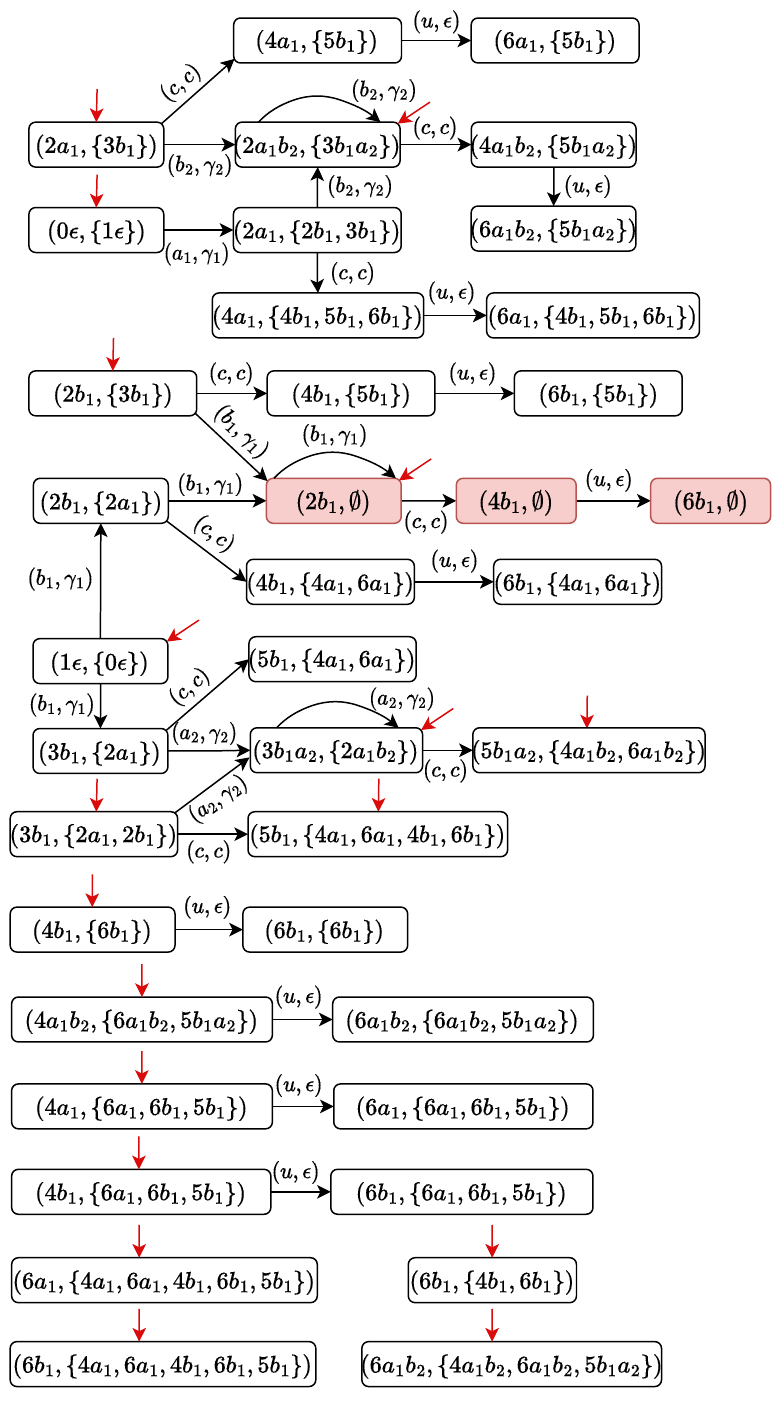}
\caption{Part of the concurrent composition $Cc(\widehat{G}_w,Obs_w(\widehat{G}))$.}
\label{fig:4.9}
\end{figure}

We present the main results of verifying $K$- and infinite-step weak anonymity using the proposed concurrent composition $Cc(\widehat{G}_w, Obs_w(\widehat{G}))$.

\begin{theorem}\label{th:4.2}
\rm{
Given a system $G=(X, \Sigma, \delta, X_0)$, let $\widehat{Obs}(\widehat{G}) = (\widehat{X}_{obs}, \widehat{\Sigma}_{obs}, \widehat{\delta}_{obs}, \widehat{X}_{0,obs})$ be the observer and $Cc(\widehat{G}_w, Obs_w(\widehat{G})) = (X_c^w, \Sigma_c^w, \delta_c^w, X_{0,c}^w)$ be the corresponding concurrent composition.
$G$ is infinite-step (resp., $K$-step) weakly anonymous w.r.t. $\Sigma_o$, $\Gamma$, and $P_w$ if and only if there exists no state of the form $(\cdot, \emptyset)$ in $Cc(\widehat{G}_w, Obs_w(\widehat{G}))$ that is reachable from $X_{0,c}^w$ (resp., within $K$ observational steps).
}
\end{theorem}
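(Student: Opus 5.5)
The plan is to mirror the proof of Theorem~\ref{th:4.1}, replacing $G$ by the augmented automaton $\widehat{G}$ and $P_{str}$ by $P^n$. The first step is a correspondence lemma between $G$ under $P_w$ and $\widehat{G}$ under $P^n$. Every $s\in\mathcal{L}(G)$ lifts to a unique $\widehat{s}\in\mathcal{L}(\widehat{G})$, obtained by annotating each anonymous event with its first-occurrence index $R(\cdot,\cdot)$. This lift satisfies $P^n(\widehat{s})=P_w(s)$, and $\widehat{\delta}(\widehat{X}_0,\widehat{s})$ projects under $l(\cdot)$ onto $\delta(X_0,s)$. Conversely, every string of $\widehat{G}$ erases to a string of $G$ with the same relationship. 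Rules (2)--(3) of $\widehat{\delta}$ enforce exactly the indexing of Definition~\ref{def:3.2-2}, since the record $p$ stores which anonymous events have already appeared and with which index.

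Because $P_w$ is dynamic, the key point is that continuations depend on this history. After a prefix with history $p$, a continuation $t$ in $G$ has a $P_w$-image determined by $p$. This image equals the $P^n$-image of the unique continuation from $(x,p)$ in $\widehat{G}$. Consequently,
\[
\widehat{E}_G(P_w,\alpha|\alpha\beta)=\Big\{\,l(\widehat{x})\;\Big|\;\widehat{x}\in\widehat{\delta}_{obs}(\widehat{X}_{0,obs},\alpha),\ \exists \widehat{t}:\ P^n(\widehat{t})=\beta,\ \widehat{\delta}(\widehat{x},\widehat{t})\neq\emptyset\,\Big\}.
\]
This is the step I expect to be the main obstacle. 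Several augmented states $(x,p),(x,p')$ may share the same original state $x$, so the estimate is a set of left parts. This is why $AS_w$ removes the whole fibre $n(\widehat{x},\widehat{x}_{obs})$ rather than only $\widehat{x}$, and sets $AS_w=\emptyset$ when $|m(\widehat{x}_{obs})|=1$.

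For the ``if'' direction, argue by contrapositive. Suppose $|\widehat{E}_G(P_w,\alpha|\alpha\beta)|=1$ with $|\beta|\le K$, and let the estimate be $\{x\}$. Lift the witnessing string $st$ to $\widehat{s}\widehat{t}$, with $\widehat{x}=(x,p)\in\widehat{x}_{obs}=\widehat{\delta}_{obs}(\widehat{X}_{0,obs},\alpha)$ reached by $\widehat{s}$. There are two cases. If $|m(\widehat{x}_{obs})|=1$, then $(\widehat{x},\emptyset)\in X^w_{0,c}$ already, which gives zero observational steps. Otherwise, every state of $AS_w(\widehat{x},\widehat{x}_{obs})$ has left part different from $x$. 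Uniqueness of the estimate then means that none of these states admits a continuation with $P^n$-image $\beta$, so $\delta_w(AS_w(\widehat{x},\widehat{x}_{obs}),\beta)=\emptyset$. Running $\widehat{t}$ in $\widehat{G}_w$ synchronously with $\beta$ in $Obs_w(\widehat{G})$ then reaches a state $(\cdot,\emptyset)$ within $|\beta|\le K$ observational steps. Here I use that $\emptyset$ is absorbing in $Obs_w$, so the first hit is at most $|\beta|$ steps away.

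For the ``only if'' direction, take a path from $(\widehat{x},AS_w(\widehat{x},\widehat{x}_{obs}))$ to $(\widehat{x}',\emptyset)$ with $|R(s_c)|\le K$. Reachability of $\widehat{x}_{obs}$ gives a prefix $\widehat{s}$ with $P^n(\widehat{s})=\alpha$ reaching $\widehat{x}$. The string $L(s_c)$ is a continuation from $\widehat{x}$ with $P^n$-image $\beta=R(s_c)$, which shows $l(\widehat{x})$ lies in the estimate. Emptiness of the right component shows that no other left part in $\widehat{x}_{obs}$ admits a continuation with image $\beta$. By the correspondence lemma, erasing indices gives $st\in\mathcal{L}(G)$ with $|\widehat{E}_G(P_w,\alpha|\alpha\beta)|=1$, so $G$ is not $K$-step weakly anonymous. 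The infinite-step case is identical after dropping the bound on $|\beta|$.
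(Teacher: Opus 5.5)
Your route is the same as the paper's. You argue by contrapositive in both directions, lift strings of $G$ to $\widehat{G}$, and read the delayed estimate off $\widehat{x}_{obs}=\widehat{\delta}_{obs}(\widehat{X}_{0,obs},\alpha)$ together with the adjoint set. You are more explicit than the paper: you state the correspondence lemma, characterize $\widehat{E}_G(P_w,\alpha|\alpha\beta)$ as the left parts of $\beta$-extendable states of $\widehat{x}_{obs}$, and treat $|m(\widehat{x}_{obs})|=1$ separately. The ``only if'' direction is sound, because it uses only $\widehat{x}_{obs}\setminus n(\widehat{x},\widehat{x}_{obs})\subseteq AS_w(\widehat{x},\widehat{x}_{obs})$ and the fact that $AS_w$ is closed under unobservable reach.

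There is, however, a false step in the ``if'' direction: the claim that every state of $AS_w(\widehat{x},\widehat{x}_{obs})$ has left part different from $x$. $AS_w$ is $UR_w(\widehat{x}_{obs}\setminus n(\widehat{x},\widehat{x}_{obs}))$, and this can re-enter the fibre of $x$ whenever another state of $\widehat{x}_{obs}$ reaches $x$ by unobservable events. The paper's own Table~\ref{tab:4.2} shows $AS_w(6b_1,\{4b_1,6b_1\})=\{4b_1,6b_1\}\ni 6b_1$. For a state $(x,q)\in AS_w$, uniqueness of the estimate gives nothing, since $x$ is in the estimate. So your deduction $\delta_w(AS_w(\widehat{x},\widehat{x}_{obs}),\beta)=\emptyset$ is unjustified as written.

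The conclusion is still true, but for a different reason. Every state of $AS_w$ is reached from some $(y,q)\in\widehat{x}_{obs}$ with $y\neq x$ by a string $u\in\Sigma_{uo}^*$. Such a string leaves the record $q$ unchanged (rule (1) of $\widehat{\delta}$) and is erased by $P^n$. Hence a continuation $\widehat{t}'$ with $P^n(\widehat{t}')=\beta$ from that state yields $u\widehat{t}'$ from $(y,q)$ with the same image. That places $y\neq x$ in $\widehat{E}_G(P_w,\alpha|\alpha\beta)$, a contradiction. The paper's proof merely asserts this step ``by the construction of $Obs_w(\widehat{G})$''.

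With this replacement your proof is complete. The remark that $\emptyset$ is absorbing is harmless but unnecessary, since the right component after $\widehat{t}$ is already $\delta_w(AS_w,\beta)=\emptyset$.
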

\begin{proof}
\rm{We first prove the $K$-step weak anonymity.\\
(if) By contrapositive, assume that system $G$ does not satisfy $K$-step weak anonymity w.r.t. $\Sigma_o$, $\Gamma$, and $P_w$.
By Definition \ref{def:3.3}, there exists a string $s't'\in \mathcal{L}(G)$ such that $|\widehat{E}_{G}(P_w, \alpha|\alpha\beta)|>1$ does not hold, where $P_w(s')=\alpha$, $P_w(s't')=\alpha\beta$, and $|\beta|\leq K$.
This implies $|\widehat{E}_{G}(P_w, \alpha|\alpha\beta)|=1$.
In this case, assume $\widehat{E}_{G}(P_w, \alpha|\alpha\beta)=\{x_w\}$.
By the construction of $\widehat{Obs}(\widehat{G})$, $(x_w, \cdot)\in \widehat{x}_{obs}$ can be obtained, where $\widehat{x}_{obs}=\widehat{\delta}_{obs}(\widehat{X}_{0,obs}, \alpha)$.
By the construction of $Obs_w(\widehat{G})$, one sees $\delta_w(AS_w((x_w, \cdot), \widehat{x}_{obs}), \beta)=\emptyset$.
By the construction of $Cc(\widehat{G}_w, Obs_w(\widehat{G}))$, there exists a string $s_c\in \mathcal{L}(Cc(\widehat{G}_w, Obs_w(\widehat{G})))$ such that $(x_{w,k} ,\emptyset)\in \delta_c^w(((x_w, \cdot), AS_w((x_w, \cdot), \widehat{x}_{obs})),s_c)$, where $x_{w,k}\in \widehat{\delta}((x_w, \cdot), L(s_c))$.
Due to $|P^n(L(s_c))| =|R(s_c)|\leq K$, there exists a state of the form $(\cdot, \emptyset)$ in $Cc(\widehat{G}_w, Obs_w(\widehat{G}))$ that is reachable from $X_{0,c}^w$ within $K$ observational steps.\\
(only if) Still by contrapositive, suppose that there exists a state of the form $(\cdot, \emptyset)$ in $Cc(\widehat{G}_w, Obs_w(\widehat{G}))$ that is reachable from $X_{0,c}^w$ within $K$ observational steps.
By the construction of $Cc(\widehat{G}_w, Obs_w(\widehat{G}))$, one concludes that there exists a state $(x_{w,k}, \emptyset)\in X_c^w$ reached from some initial state $((x_w, \cdot), AS_w((x_w, \cdot), \widehat{x}_{obs}))\in X_{0,c}^w$ and a string $s_c\in \mathcal{L}(Cc(\widehat{G}_w, Obs_w(\widehat{G})), ((x_w, \cdot), AS_w((x_w, \cdot), \widehat{x}_{obs})))$ with $|R(s_c)|\leq K$, where $(x_w, \cdot)\in \widehat{x}_{obs}$, $x_{w,k}\in \widehat{\delta}((x_w, \cdot), L(s_c))$, and $(x_{w,k}, \emptyset)\in \delta_c^w(((x_w, \cdot), AS_w((x_w, \cdot),\\ \widehat{x}_{obs})), s_c)$.
By the construction of $Obs_w(\widehat{G})$, we infer that $\delta_w(AS_w((x_w, \cdot), \widehat{x}_{obs}), R(s_c))=\emptyset$.
Further, $\widehat{\delta}(x'_w, t)=\emptyset$ holds for all $x'_w\in AS_w((x_w, \cdot), \widehat{x}_{obs})$ and all $t\in \widehat{\Sigma}^*$ with $P^n(t)=R(s_c)$.
Then, by the construction of $\widehat{Obs}(\widehat{G})$, there exists an initial state $\widehat{x}_0\in \widehat{X}_0$ and a string $s\in \mathcal{L}(\widehat{G}, \widehat{x}_0)$ such that $(x_w, \cdot) \in \widehat{\delta}(\widehat{x}_0, s)$ and $\widehat{\delta}_{obs}(\widehat{X}_{0,obs}, P^n(s)) =\widehat{x}_{obs}$ due to $(x_w, \cdot)\in \widehat{x}_{obs}$.
Let $P^n(s)=\alpha$ and $P^n(t)=\beta$.
Based on the above analysis, $x_w$ is the unique state in $\widehat{E}_{G}(P_w, \alpha|\alpha\beta)$, i.e., $|\widehat{E}_{G}(P_w, \alpha|\alpha\beta)|=1$, where $|\beta|\leq K$.
System $G$ does not satisfy $K$-step weak anonymity w.r.t. $\Sigma_o$, $\Gamma$, and $P_w$.

The proof of the case of infinite-step weak anonymity can be obtained similarly.
}
\end{proof}

We propose an algorithm for verifying $K$- and infinite-step weak anonymity, as shown in Algorithm \ref{al:2}.
As seen, the time complexity of computing $\widehat{G}$, $\widehat{Obs}(\widehat{G})$, $Obs_w(\widehat{G})$, $\widehat{G}_w$, and $Cc(\widehat{G}_w, Obs_w(\widehat{G}))$ are 
$\mathcal{O}((|\Gamma|^2+|\Sigma_o \backslash \Gamma|+|\Sigma_{uo}|) |\widehat{X}|^2)$, 
$\mathcal{O}(|\Gamma| + |\Sigma_o\backslash \Gamma|) |\Sigma_{uo}| |\widehat{X}|^2 2^{|\widehat{X}|})$, 
$\mathcal{O}(|\Gamma| + |\Sigma_o\backslash \Gamma|) |\Sigma_{uo}| |\widehat{X}|^2 2^{|\widehat{X}|})$, 
$\mathcal{O}((|\Gamma|^2+|\Sigma_o \backslash \Gamma|+|\Sigma_{uo}|) |\widehat{X}|^2)$,
and $\mathcal{O}((|\Gamma|^2+|\Sigma_o\backslash\Gamma|+|\Sigma_{uo}|) |\widehat{X}|^2 2^{|\widehat{X}|})$, respectively, where $|\widehat{X}|=|X|\sum_{k=0}^{|\Gamma|}\frac{|\Gamma|!}{k!}$.
The overall time complexity of verifying $K$- and infinite-step weak anonymity is 
$\mathcal{O}((2|\Sigma_{uo}|(|\Gamma|+|\Sigma_o\backslash\Gamma|) + (|\Gamma|^2+|\Sigma_o\backslash\Gamma|+|\Sigma_{uo}|)) |\widehat{X}|^2 2^{|\widehat{X}|})$.

\begin{algorithm}[!htbp]
  \DontPrintSemicolon
  \caption{Verification of Infinite-Step (resp., $K$-step) Weak Anonymity}\label{al:2}
  \KwIn{System $G=(X, \Sigma, \delta, X_0), \Sigma_o, \Sigma_{uo}, \Gamma \subseteq \Sigma_o$, and $K\in \mathbb{N}$.}
  \KwOut{``YES" if $G$ satisfies infinite-step (resp., $K$-step) weak anonymity w.r.t. $\Sigma_o$, $\Gamma$, and $P_w$, ``NO" otherwise.}
   Construct the augmented automaton $\widehat{G}$ for $G$;\;
   Construct the observer $\widehat{Obs}(\widehat{G})$ for $\widehat{G}$;\;
  \eIf{there exists a state $\widehat{x}_{obs}\in \widehat{X}_{obs}$ such that $|m(\widehat{x}_{obs})|=1$,}
  {\textbf{return} ``No'';\;
	\textbf{stop};\;
  }{
	Construct modified observer $Obs_w(\widehat{G})$ for $\widehat{G}$;\;
	Construct $\widehat{G}_w$ as a variant of $\widehat{G}$;\;
	Construct $Cc(\widehat{G}_w, Obs_w(\widehat{G}))$ as the concurrent composition of $\widehat{G}_w$ and $Obs_w(\widehat{G})$;\;
	Use the ``Breadth-First Search Algorithm” to decide whether there exists a state of the form $(\cdot,\emptyset)$ in $Cc(\widehat{G}_w, Obs_w(\widehat{G}))$ that is reachable from $X_{0,c}^w$ (resp., within $K$ observational steps);\;
	\eIf{there exists such a state $(\cdot,\emptyset)$ in $Cc(\widehat{G}_w, Obs_w(\widehat{G}))$ that is reachable from $X_{0,c}$ (resp., within $K$ observational steps),}
	{\textbf{return} ``No'';\;
	\textbf{stop};\;
	}
	{\textbf{return} ``Yes'';\;
	\textbf{stop}.
	}
  }
\end{algorithm}

\begin{example}
\rm{
Reconsider Example \ref{ex:3.3}.
The corresponding concurrent composition $Cc(\widehat{G}_w, Obs_w(\widehat{G}))$ is shown in Fig. \ref{fig:4.9}, where there exists an initial state $(2b_1, \emptyset)$.
By Theorem \ref{th:4.2},
we conclude that $G$ is not $K$-step weakly anonymous w.r.t. to $P_w$ and $\Gamma$ for any $K$.
Therefore, $G$ is also not infinite-step weakly anonymous w.r.t. to $P_w$ and $\Gamma$.
The results are consistent with Example \ref{ex:3.3}.
}
$\hfill\square$
\end{example}

Finally, we discuss both the proposed verification method and the existing ones for verifying anonymity and opacity properties.
Two comparison tables of methods for verifying different types of anonymity and opacity with complexities are presented in Tables \ref{tab:4.3} and \ref{tab:4.4}, respectively.
From the tables, one sees that our concurrent composition plus the observer method currently provides the most efficient algorithms for verifying all kinds of anonymity and opacity properties. 
Furthermore, we find that the complexities of methods for verifying $K$- and infinite-step opacity and anonymity (i.e., initial-state estimator, two-way observer, and non-secret specification plus the observer) all depend on the value of $K$.
In contrast, the idea of the concurrent-composition technique is to construct a $K$-free information structure.
It means that once a system is given, a concurrent composition information structure can be constructed to verify $K$-step strong (resp., weak) anonymity for any $K\in \mathbb{N}$.
This contrasts with the two-way observer approach, where the construction depends explicitly on $K$ (i.e., for each different $K$, a new two-way observer must be constructed).
To sum up, the use of concurrent-composition techniques results in lower complexity and enables the constructed information structure to be independent of $K$.

Note that the proposed method for verifying $K$- and infinite-step strong anonymity can also be used to verify $K$- and infinite-step anonymity in \cite{Basilio2021}.
In addition, the methods listed in Table \ref{tab:4.4} for verifying $K$- and infinite-step opacity can also be used to verify $K$- and infinite-step anonymity (resp., $K$- and infinite-step strong anonymity with slight modifications).
When $\Gamma=\emptyset$, $K$-step (resp., infinite-step) strong anonymity and weak anonymity reduce to $K$-step (resp., infinite-step) anonymity.
In this case, all methods for verifying $K$- and infinite-step opacity can be used for their verification.

\begin{table*}[htbp]
\renewcommand{\arraystretch}{1.3}
\caption{Methods for verifying different types of anonymity with complexities}
\label{tab:4.3}
\centering
\resizebox{17cm}{!}{
\begin{tabular*}{1.08\linewidth}{@{}lllll@{}}
\toprule
     \begin{tabular}{l}
     method 
     \end{tabular} &
     \begin{tabular}{l}
     current-state 
     anonymity \end{tabular} &
     \begin{tabular}{l}
     $K$/infinite-step 
     anonymity \end{tabular} & 
     \begin{tabular}{l}
     $K$/infinite-step strong\\
     anonymity \end{tabular} & 
     \begin{tabular}{l}
     $K$/infinite-step weak\\
     anonymity \end{tabular} \\
\midrule
 \begin{tabular}{l}
 observer
 \end{tabular} &
 \begin{tabular}{l}
 $\mathcal{O}(|\Sigma_o||\Sigma||X|^2 2^{|X|})$
 (\cite{Wu2014})
 \end{tabular} & & & \\
\midrule
 \begin{tabular}{l}
 two-way
 observer \end{tabular} & &
 \begin{tabular}{l}
 $\mathcal{O}(|\Sigma_o||\Sigma||X|^2 2^{|X|+1}+$\\
 min\{$2^{|X|}, {|\Sigma_o|}^K$\}$|\Sigma_o| 2^{|X|})$/\\
 $\mathcal{O}(|\Sigma_o||\Sigma||X|^2 2^{|X|+1}+$\\
 $|\Sigma_o| 4^{|X|})$ (\cite{Basilio2021}, \cite{Yin2017})
 \end{tabular} & 
 \begin{tabular}{l}
 $\mathcal{O}(|\Sigma_o||\Sigma||X|^2 2^{|X|+1}+$\\
 min\{$2^{|X|}, {|\Sigma_o|}^K$\}$|\Sigma_o| 2^{|X|})$/\\
 $\mathcal{O}(|\Sigma_o||\Sigma||X|^2 2^{|X|+1}+$\\
 $|\Sigma_o| 4^{|X|})$ (\cite{Basilio2021}, \cite{Yin2017})
 \end{tabular} & \\
\midrule
 \begin{tabular}{l}
 concurrent composition\\
 + observer
 \end{tabular} & 
 \begin{tabular}{l}
 $\mathcal{O}(|\Sigma_o||\Sigma||X|^2 2^{|X|})$\\
 (\cite{Kuize2023} with slight\\
 modification)
 \end{tabular} &
 \begin{tabular}{l}
 $\mathcal{O}(|X|^2 2^{|X|} (|\Sigma_o||\Sigma_{uo}|+$\\
 $|\Sigma|))$\\
 (Theorem \ref{th:4.1} with slight\\
 modification)
 \end{tabular} & 
 \begin{tabular}{l}
 $\mathcal{O}(|X|^2 2^{|X|} (|\Sigma_o||\Sigma_{uo}|+$\\
 $|\Sigma|))$\\
 (Theorem \ref{th:4.1})
 \end{tabular} & 
 \begin{tabular}{l}
 $\mathcal{O}(|\widehat{X}|^2 2^{|\widehat{X}|} (2|\Sigma_o|(|\Gamma|+$\\
 $|\Sigma_o\backslash \Gamma|)+(|\Gamma|^2+$\\
 $|\Sigma_o\backslash \Gamma|+|\Sigma_{uo}|))$\\
 (Theorem \ref{th:4.2})
 \end{tabular} \\
\bottomrule
\end{tabular*}}
\end{table*}

\begin{table*}[htbp]
\renewcommand{\arraystretch}{1.3}
\caption{Methods for verifying different types of state-based opacity with complexities}
\label{tab:4.4}
\centering
\resizebox{17cm}{!}{
\begin{tabular*}{1.06\linewidth}{@{}lllll@{}}
\toprule
     \begin{tabular}{l}
     method 
     \end{tabular} &
     \begin{tabular}{l}
     current-state 
     opacity \end{tabular} &
     \begin{tabular}{l}
     initial-state
     opacity \end{tabular} & 
     \begin{tabular}{l}
     $K$-step 
     opacity \end{tabular} & 
     \begin{tabular}{l}
     infinite-step 
     opacity \end{tabular} \\
\midrule
 \begin{tabular}{l}
 observer
 \end{tabular} &
 \begin{tabular}{l}
 $\mathcal{O}(|\Sigma_o||\Sigma||X|^2 2^{|X|})$ \\
 (\cite{Saboori2007})
 \end{tabular} & & & \\
\midrule
 \begin{tabular}{l}
 reverse observer
 \end{tabular} &  & 
 \begin{tabular}{l}
 $\mathcal{O}(|\Sigma_o||\Sigma||X|^2 2^{|X|})$ (\cite{Wu2013})
 \end{tabular} & & \\
\midrule
 \begin{tabular}{l}
 initial-state
 estimator \end{tabular} &  & 
 \begin{tabular}{l}
 $\mathcal{O}(|\Sigma_o|2^{|X|^2})$ (\cite{Wu2013})
 \end{tabular} & 
 \begin{tabular}{l}
 $\mathcal{O}(|\Sigma_o| 2^{|X|} (|\Sigma_o|+$ \\
 $1)^{\mbox{min}(K, 2^{|X|^2-1})})$ (\cite{Saboori2011})
 \end{tabular}& 
 \begin{tabular}{l}
 $\mathcal{O}(|\Sigma_o|2^{|X|+|X|^2})$ (\cite{Saboori2012})
 \end{tabular} \\
\midrule
 \begin{tabular}{l}
 two-way
 observer \end{tabular} & & & 
 \begin{tabular}{l}
 $\mathcal{O}(|\Sigma_o||\Sigma||X|^2 2^{|X|+1}+$\\
 min\{$2^{|X|}, {|\Sigma_o|}^K$\}$|\Sigma_o| 2^{|X|})$ \\(\cite{Yin2017})
 \end{tabular} & 
 \begin{tabular}{l}
 $\mathcal{O}(|\Sigma_o||\Sigma||X|^2 2^{|X|+1}+$\\
 $|\Sigma_o| 4^{|X|})$ (\cite{Yin2017})
 \end{tabular} \\
 \midrule
 \begin{tabular}{l}
 non-secret 
 specification\\
 automaton
 + observer \end{tabular} & 
 \begin{tabular}{l}
 $\mathcal{O}(|\Sigma_o||\Sigma||X|^2 2^{2|X|+3})$\\
 (\cite{Wintenberg2022})
 \end{tabular} & 
 \begin{tabular}{l}
 $\mathcal{O}(3|\Sigma_o||\Sigma||X|^2 2^{3|X|+2})$\\
 (\cite{Wintenberg2022})
 \end{tabular} & 
 \begin{tabular}{l}
 $\mathcal{O}(|\Sigma_o||\Sigma||X|^2 (K+$ \\
 $3) 2^{(K+3)|X|+2})$ (\cite{Wintenberg2022})
 \end{tabular}& 
 \begin{tabular}{l}
 $\mathcal{O}(|\Sigma_o||\Sigma||X|^2 (2^{|X|}+$ \\
 $3) 2^{(2^{|X|}+3)|X|+2})$ (\cite{Wintenberg2022})
 \end{tabular} \\
\midrule
 \begin{tabular}{l}
 concurrent composition\\
 + observer
 \end{tabular} & 
 \begin{tabular}{l}
 $\mathcal{O}(|\Sigma_o||\Sigma||X|^2 2^{|X|})$ \\
 (\cite{Kuize2023})
 \end{tabular} &
 \begin{tabular}{l}
 $\mathcal{O}(|\Sigma_o||\Sigma||X|^2 2^{|X|})$ \\
 (\cite{Kuize2023})
 \end{tabular} & 
 \begin{tabular}{l}
 $\mathcal{O}(|\Sigma_o||\Sigma||X|^2 2^{|X|})$ \\
 (\cite{Balun2023}, called product\\
 automaton)
 \end{tabular} & 
 \begin{tabular}{l}
 $\mathcal{O}(|\Sigma_o||\Sigma||X|^2 2^{|X|})$ \\
 (\cite{Balun2021}, called product\\
 automaton)
 \end{tabular} \\
\bottomrule
\end{tabular*}}
\end{table*}

\section{Maximal K and upper bounds on K for K-step strong/weak anonymity}\label{sec:5}
\subsection{Maximal K for K-step strong/weak anonymity}
In this subsection, we aim to obtain the maximal value of $K$ that makes system $G$ satisfy $K$-step strong (resp., weak) anonymity.
To this end, we first define the minimum length of the strong (resp., weak) exposure distance $k_s$ (resp., $k_w$) which is the minimum value of $K$ that violates $K$-step strong (resp., weak) anonymity.
\begin{definition}\label{de:5.1}(Minimum length of the strong and weak exposure distance)
\rm{Given a system $G$, its corresponding two concurrent compositions are $Cc(G_m, Obs_m(G))  = (X_c, \Sigma_c, \delta_c, X_{0,c})$ and $Cc(\widehat{G}_w, Obs_w(\widehat{G})) = (X_c^w, \Sigma_c^w, \delta_c^w,\\ X_{0,c}^w)$, respectively.\\
(1) A non-negative integer $k_s\in \mathbb{N}$ is said to be the minimum length of the strong exposure distance if
\begin{align*}
(\nexists x_{0, c}\in X_{0,c}) (\nexists t\in \mathcal{L}(Cc(G_m, Obs_m(G)), x_{0, c}))
(\nexists Z\in \mathbb{N})\\
[(\cdot, \emptyset)\in \delta_c(x_{0, c}, t) \land |R(t)|=Z \land Z < k_s].
\end{align*}
(2) A non-negative integer $k_w\in \mathbb{N}$ is said to be the minimum length of the weak exposure distance if
\begin{align*}
(\nexists x_{0, c}^w\in X_{0,c}^w) (\nexists t'\in \mathcal{L}(Cc(\widehat{G}_w, Obs_w(\widehat{G})), x_{0, c}^w))
(\nexists Z'\in \mathbb{N})\\
[(\cdot, \emptyset)\in \delta_c^w(x_{0, c}^w, t') \land |R(t')|=Z' \land Z' < k_w].\rightdiamond{}
\end{align*}}
\end{definition}
Now, we design an algorithm to compute the maximal $K$ for $K$-step strong/weak anonymity, as shown in Algorithm \ref{al:3}.
In $Cc(G_m, Obs_m(G))$ (resp., $Cc(\widehat{G}_w, Obs_w(\widehat{G}))$), we check whether there exists a state taking the form $(\cdot, \emptyset)$ reached from an initial state in $X_{0,c}$ (resp., $X_{0,c}^w$).
If the answer is ``no", Algorithm \ref{al:3} outputs ``$\infty$", which means that system $G$ satisfies infinite-step strong (resp., weak) anonymity.
If the answer is ``yes", we need to find the minimum length of the strong (resp., weak) exposure distance $k_s$ (resp., $k_w$).
Algorithm \ref{al:3} outputs ``NA" if $k_s = 0$ (resp., $k_w = 0$), indicating that the maximal $K$ does not exist.
Algorithm \ref{al:3} outputs ``$k_s-1$" (resp., ``$k_w-1$") if $k_s > 0$ (resp., $k_w > 0$), suggesting that the maximal $K$ for $K$-step strong (resp., weak) anonymity is $k_s-1$.

\begin{algorithm}[!htbp]
  \DontPrintSemicolon
  \caption{Computing the Maximal $K$ for $K$-Step Strong (resp., Weak) Anonymity}\label{al:3}
  \KwIn{System $G=(X, \Sigma, \delta, X_0), \Sigma_o, \Sigma_{uo}$, and $\Gamma \subseteq \Sigma_o$.}
  Construct the corresponding concurrent composition $Cc(G_m, Obs_m(G))$ (resp., $Cc(\widehat{G}_w, Obs_w(\widehat{G}))$);\;
  \eIf{there exists no state of the form $(\cdot, \emptyset)$ in $Cc(G_m, Obs_m(G))$ (resp., $Cc(\widehat{G}_w, Obs_w(\widehat{G})))$}
  {Output ``$\infty$" and exit;\;
  }{
     Use the ``Breadth-First Search” to find the minimum length of the strong (resp., weak) exposure distance $k_s$ (resp., $k_w$);\;
	\eIf{$k_s = 0$ (resp., $k_w = 0$)}
	{Output ``NA" and exit;\;
	}
	{Output ``$k_s-1$" (resp., ``$k_w-1$") and exit;
	}
  }
\end{algorithm}


\subsection{Upper bounds on K in K-step strong anonymity and weak anonymity}
In this subsection, we will give two upper bounds on $K$ for $K$-step strong anonymity and weak anonymity, respectively.

\begin{proposition}\label{pro:6.1}
\rm{Given a system $G=(X, \Sigma, \delta, X_0)$, for any $\widehat{K} > K \geq |X|\cdot 2^{|X|}$ (resp., $|\widehat{X}|\cdot 2^{|\widehat{X}|}$), $G$ is $K$-step strongly (resp., weakly) anonymous w.r.t. $\Sigma_o$, $\Gamma$, and $P_{str}$ (resp., $P_w$) if and only if $G$ is $\widehat{K}$-step strongly (resp., weakly) anonymous w.r.t. $\Sigma_o$, $\Gamma$, and $P_{str}$ (resp., $P_w$).}
\end{proposition}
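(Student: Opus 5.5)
The plan is to reduce the statement to a reachability question in the concurrent composition and then shorten paths by a pigeonhole argument. I treat the strong case in detail; the weak case is the same argument with $G$, $Cc(G_m, Obs_m(G))$ and $|X|\cdot 2^{|X|}$ replaced by $\widehat{G}$, $Cc(\widehat{G}_w, Obs_w(\widehat{G}))$ and $|\widehat{X}|\cdot 2^{|\widehat{X}|}$, using Theorem~\ref{th:4.2} in place of Theorem~\ref{th:4.1}.

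\textbf{Easy direction.} If $G$ is $\widehat{K}$-step strongly anonymous, then it is $K$-step strongly anonymous by Proposition~\ref{pro:3.1}~1), because $K<\widehat{K}$.

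\textbf{Hard direction.} I argue by contrapositive. Suppose $G$ is not $\widehat{K}$-step strongly anonymous. By Theorem~\ref{th:4.1}, some state $(x',\emptyset)$ is reachable in $Cc(G_m, Obs_m(G))$ from some $x_{0,c}\in X_{0,c}$ along a string $t$ with $|R(t)|\le \widehat{K}$. It then suffices to produce a string $t'$ from some initial state that reaches a state of the form $(\cdot,\emptyset)$ with $|R(t')|\le |X|\cdot 2^{|X|}-1\le K$. Theorem~\ref{th:4.1} then gives a violation of $K$-step strong anonymity.

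\textbf{Path shortening.} Take a path witnessing reachability of $(\cdot,\emptyset)$ for which the number of observable transitions $|R(t)|$ is minimal. Record the composition state immediately after each observable transition, together with the starting state, giving states $q_0,q_1,\dots,q_{|R(t)|}$. There are at most $|X_c|=|X|\cdot 2^{|X|}$ distinct states. If $|R(t)|\ge |X|\cdot 2^{|X|}$, then two of these states coincide, say $q_i=q_j$ with $i<j$. Deleting the segment between them gives a valid path from the same initial state to the same final state $(\cdot,\emptyset)$ with strictly fewer observable steps. This works because the composition is a transition system, so concatenating a prefix ending at $q_i$ with a suffix starting at $q_j=q_i$ is legal, and unobservable segments are kept intact. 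This contradicts minimality, so $|R(t)|\le |X|\cdot 2^{|X|}-1$.

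\textbf{Main obstacle.} The delicate step is the bookkeeping of observational steps versus raw transitions: the pigeonhole must run over states sampled at observable steps only, so that unobservable runs do not inflate the count. I also need the convention that ``within $K$ observational steps'' in Theorem~\ref{th:4.1} means $|R(t)|\le K$. A second point to check is that the emptiness of the right component persists: once the right component is $\emptyset$, every successor of it in $Obs_m(G)$ is also $\emptyset$. This guarantees that the truncation keeps the target of the form $(\cdot,\emptyset)$, although since the endpoint is unchanged it is not strictly needed.

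For the weak case, the state count of $Cc(\widehat{G}_w, Obs_w(\widehat{G}))$ is at most $|\widehat{X}|\cdot 2^{|\widehat{X}|}$, so the same shortening yields the stated bound.
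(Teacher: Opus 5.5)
Your proposal is correct and uses the same idea as the paper's proof. Both arguments reduce the question, via Theorem~\ref{th:4.1}, to reachability of a state $(\cdot,\emptyset)$ in $Cc(G_m, Obs_m(G))$, and then apply the pigeonhole principle to the composition states sampled after each observational step. Your execution is tighter than the paper's: after finding $y_{i,c}=y_{j,c}$, the paper simply asserts that $y_{i,c}$ has the form $(\cdot,\emptyset)$, which does not follow unless $j=e$, whereas your minimality-plus-loop-excision step (splicing the prefix ending at $q_i$ with the suffix after $q_j$) is exactly what is needed, covers all $\widehat{K}>K\ge |X|\cdot 2^{|X|}$ at once instead of the reduction to $\widehat{K}=K+1$, and even gives the threshold $|X|\cdot 2^{|X|}-1$.
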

\begin{proof}
\rm{We show it holds for the case of $K$-step strong anonymity first.
By Proposition \ref{pro:3.1}, the ``if" part holds.
We focus on the ``only if" part, which requires us to prove that if $G$ satisfies $K$-step strong anonymity, then $G$ satisfies $\widehat{K}$-step strong anonymity.
Without loss of generality, we assume $\widehat{K} = K + 1$, which can be naturally generalized to $\widehat{K} > K + 1$.
For brevity, let $e=(|X| \cdot 2^{|X|}) + 1$.
Then, we only prove that $G$ is $e$-step strongly anonymous if $G$ is $(e-1)$-step strongly anonymous.
By contrapositive, suppose that system $G$ does not satisfy $e$-step strong anonymity.
It means that there exists a string $st\in \mathcal{L}(G)$ with $P_{str}(s)=\alpha$, $P_{str}(t)=\beta$, and $|\beta|\leq e$ such that $|\widehat{E}_{G}(P_{str}, \alpha|\alpha\beta)|>1$ does not hold.
By the construction of $Cc(G_m, Obs_m(G))$, there exists a state $x_{n,c} = (\cdot, \emptyset)$ reached by a string $s_c\in \mathcal{L}(Cc(G_m, Obs_m(G)))$ from an initial state $x_{0, c}$ within $e$ observational steps, i.e., $x_{n,c} \in \delta_c(x_{0, c}, s_c)$ and $|P_{str}(L(s_c))| = |R(s_c)| \leq e$.
Consider the complete run $x_{0,c}\xrightarrow{(\sigma_1, \eta_1)} x_{1,c}\xrightarrow{(\sigma_2, \eta_2)} \cdots \xrightarrow{(\sigma_n, \eta_n)} x_{n,c}=(\cdot, \emptyset)$.
Let the actual number of observational steps in $s_c$ be $k$, i.e., $k=|R(s_c)|=|\eta_1\eta_2\cdots \eta_n|$, where $k\leq e$.
Now, identify the positions where new observational steps occur. 
Let $0=m_0<m_1<m_2<\cdots <m_k=n$ be the indices where $\eta_{m_i}\neq \epsilon$.
Define the immediate state after the $i$-th observational step as $y_{i,c}=x_{m_i,c}$.
One has $y_{k,c}=x_{m_k,c}=x_{n,c}=(\cdot, \emptyset)$ since the right component $\emptyset$ of $x_{n,c}$ is derived from the last updated observational step at the end of the string $s_c$.
The state space of $Cc(G_m, Obs_m(G))$ has at most $N=|X|\cdot 2^{|X|}$ states. 
Now, we need to consider two cases:
Case 1: $k\leq N$.
In this case state $y_{k,c}=(\cdot,\emptyset)$ is reached within $k\leq N\leq e-1$ observational steps, which contradicts the $(e-1)$-step strong anonymity.
Case 2: $k = N+1 = e$.
In this case there are $e+1$ states: $y_{0,c}, y_{1,c}, \dots$, and $y_{e,c}$, where $y_{e,c}=(\cdot, \emptyset)$.
By the pigeonhole principle, there exist indices 
$i$ and $j$ with $0\leq i<j\leq e$ such that $y_{i,c}=y_{j,c}$.
Let $\overline{s}_c$ be the prefix of $s_c$ that ends at state $y_{i,c}$ (after $i$-th observational steps). 
State $y_{i,c}=(\cdot,\emptyset)$ is reached through the string $\overline{s}_c$ within $i\leq e-1$ observational steps, contradicting the $(e-1)$-step strong anonymity.
Both cases show that system $G$ does not satisfy $(e-1)$-step strong anonymity and thus completes the proof.

The proof of the upper bounds for $K$-step weak anonymity is very similar to that of $K$-step strong anonymity.
We omit it here due to the limited space.}
\end{proof}

Theoretically, the meaning of the upper bound $\Bar{K}$ on $K$ in $K$-step strong or weak anonymity is the ``maximum depth" that the corresponding concurrent composition $Cc(G_m, Obs_m(G))$ or $Cc(\widehat{G}_w, Obs_w(\widehat{G}))$ needs to be examined.
In other words, after undergoing the corresponding $\Bar{K}$ observational steps, concurrent composition $Cc(G_m, Obs_m(G))$ or $Cc(\widehat{G}_w, Obs_w(\widehat{G}))$ no longer generates any new state that does not appear in paths of length less than or equal to $\Bar{K}$.
Moreover, for all values $L$ greater than $\Bar{K}$, $L$-step strong/weak anonymity property holds naturally.
That is, if a system satisfies (or does not satisfy) $K$-step strong/weak anonymity when $K$ reaches its upper bound $\Bar{K}$, then for any value $L$ greater than $\Bar{K}$, the system still satisfies (or does not satisfy) $L$-step strong/weak anonymity.

Practically, the upper bound $\Bar{K}$ on $K$ may be significant for dynamic security strategies.
For instance, cloud service providers aim to protect customers' virtual machines from side-channel attacks. 
The cloud platform defense system could consider utilizing $\Bar{K}$ to implement a dynamic security strategy.
Specifically, a virtual machine is migrated to a potentially insecure host machine. 
During $\Bar{K}$ observational steps, a high-intensity and high-overhead active defense mechanism is employed, such as real-time monitoring and/or obfuscation techniques.
This ensures the state non-uniqueness of the virtual machine, such that the cloud platform defense system satisfies $K$-step strong/weak anonymity.
Based on the above theoretical analysis, the virtual machine is stable on its current host and maintains $K$-step strong/weak anonymity when the number of observational steps exceeds $\Bar{K}$, which drives the system to switch to a low-cost maintenance mode.
In a word, a high-cost mode will stop until $\Bar{K}$ is reached, at which point a low-cost maintenance mode is activated. 
This strategy minimizes performance degradation caused by defensive measures such as migration and monitoring while ensuring security.
Notably, an optimal balance between safety and efficiency is achieved.

\begin{corollary}\label{co:6.1}
\rm{Given a system $G=(X, \Sigma, \delta, X_0)$, a strong (resp., weak) anonymous projection $P_{str}$ (resp., $P_w$) w.r.t. a set of anonymous events $\Gamma$ and a set of observable events $\Sigma_o$ with $\Gamma\subseteq \Sigma_o$, $G$ is infinite-step strongly (resp., weakly) anonymous w.r.t. $\Sigma_o$, $\Gamma$, and $P_{str}$ (resp., $P_w$) if and only if $G$ is $(|X|\cdot 2^{|X|})$-step strongly (resp., $(|\widehat{X}|\cdot 2^{|\widehat{X}|})$-step weakly) anonymous w.r.t. $\Sigma_o$, $\Gamma$, and $P_{str}$ (resp., $P_w$).}
\end{corollary}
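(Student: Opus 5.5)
The plan is to combine Proposition~\ref{pro:3.1} with Proposition~\ref{pro:6.1} and treat the strong and weak cases in parallel. I present the strong case; for the weak case I replace $|X|\cdot 2^{|X|}$ by $|\widehat{X}|\cdot 2^{|\widehat{X}|}$ and $Cc(G_m, Obs_m(G))$ by $Cc(\widehat{G}_w, Obs_w(\widehat{G}))$. Write $N=|X|\cdot 2^{|X|}$.

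\textbf{(only if).} If $G$ is infinite-step strongly anonymous, then statement 2) of Proposition~\ref{pro:3.1} gives $K$-step strong anonymity for every $K\in\mathbb{N}$. In particular this holds for $K=N$, so nothing further is required in this direction.

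\textbf{(if).} I argue by contrapositive. Suppose $G$ is $N$-step strongly anonymous but not infinite-step strongly anonymous. By Definitions~\ref{def:3.1} and \ref{def:3.2}, a violation of infinite-step strong anonymity is a string $st\in\mathcal{L}(G)$ with $P_{str}(s)=\alpha$, $P_{str}(st)=\alpha\beta$ and $|\widehat{E}_{G}(P_{str},\alpha|\alpha\beta)|\le 1$. Since $|\beta|$ is finite, such a witness already violates $|\beta|$-step strong anonymity. So $G$ fails $\widehat{K}$-step strong anonymity with $\widehat{K}=|\beta|$.

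\begin{itemize}
\item If $\widehat{K}\le N$, statement 1) of Proposition~\ref{pro:3.1} (monotonicity in $K$) shows that $G$ also fails $N$-step strong anonymity. This is a contradiction.
\item If $\widehat{K}>N$, apply Proposition~\ref{pro:6.1} with $K=N$. It states that $N$-step and $\widehat{K}$-step strong anonymity are equivalent. Since $G$ is $N$-step strongly anonymous, it must be $\widehat{K}$-step strongly anonymous, which is again a contradiction.
\end{itemize}

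Alternatively, the same conclusion follows through Theorem~\ref{th:4.1}. Failure of infinite-step anonymity gives a state $(\cdot,\emptyset)$ reachable in $Cc(G_m,Obs_m(G))$. Take a path to it with the fewest observational steps. If this path had more than $N$ observational steps, then among the post-observation states $y_{0,c},\dots,y_{k,c}$ two would coincide by the pigeonhole principle, since the composition has at most $N$ states. Excising the cycle between them gives a shorter path, contradicting minimality. Hence $(\cdot,\emptyset)$ is reachable within $N$ observational steps, and Theorem~\ref{th:4.1} gives a violation of $N$-step anonymity.

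The main obstacle is justifying the cycle excision. When the repeated state is removed, the remainder must still be a valid run from an initial state in $X_{0,c}$ ending at $(\cdot,\emptyset)$. This holds because both components evolve deterministically once the state pair is fixed: the right component by $\delta_m$, and the left component with only its state mattering. Concatenating the prefix up to $y_{i,c}$ with the suffix after $y_{j,c}$ therefore gives a legal run that ends at the same state.

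The choice $K=N$, rather than $K\ge N$ as stated in Proposition~\ref{pro:6.1}, is covered by its boundary case, so no further argument is needed there.
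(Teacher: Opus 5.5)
Your proposal is correct and takes the same approach as the paper, whose proof simply combines Proposition~\ref{pro:3.1} (for the ``only if'' direction and for monotonicity in $K$) with Proposition~\ref{pro:6.1} applied at $K=|X|\cdot 2^{|X|}$ (resp.\ $|\widehat{X}|\cdot 2^{|\widehat{X}|}$). Your backup cycle-excision argument via Theorem~\ref{th:4.1} is also sound, and it is actually cleaner than Case 2 of the paper's proof of Proposition~\ref{pro:6.1}, which asserts without justification that the repeated state $y_{i,c}$ has the form $(\cdot,\emptyset)$; one caveat is that your determinism remark is unnecessary (and inaccurate for the nondeterministic left component), since removing a loop between two equal states yields a valid run in any automaton.
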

\begin{proof}
\rm{The results are obtained directly from Propositions \ref{pro:3.1} and \ref{pro:6.1}.}
\end{proof}

\section{A case study on a simple anonymous questioning process}\label{sec:6}
In this section, we apply the developed notions of $K$- and infinite-step strong anonymity and weak anonymity to an anonymous questioning process to demonstrate the proposed methods.
We adapt the example from \cite{Bryans2008} and add more details.

A chemical development company R asks company Q to prepare a feasibility study on developing a new chemical.
When this procedure is completed, company R is informed of the result of this study.
Then, company R decides to commission a chemical safety report from a specialized agency.
The specialized agency will choose some professional consultants.
The relevant law allows the chosen professional consultants to question company Q on aspects of the feasibility study.
To ensure the integrity and impartiality of the answer of company Q, the identity of any exact consultant involved cannot be inferred by company Q.
In other words, from the viewpoint of company Q, the observable actions cannot expose any exact consultant.

We model a simple anonymous questioning process for company Q, as shown in Fig. \ref{fig:7.1}.
The whole process should start with an initial review stage before moving on to the questioning stage.
The green part of Fig. \ref{fig:7.1} represents that there are two consultants M and N who are responsible for the initial review process, whereas the blue part of Fig. \ref{fig:7.1} denotes the questioning process between consultants A--G and company Q.
The questioning of consultants is sequential since the latter's questioning is based on the former's questioning.
For instance, consultant A is only allowed to initiate questioning after the initial review of consultant M while consultant B is not.
The events $a-g$ represent that ``consultants A--G initiate questioning to company Q", respectively.
It is a fact that only the questioning process with the associated sensors can be observed.
In the general cases, not all questioning devices are equipped with sensors due to the consideration of the cost savings.
This results in the situation that only part of the questioning process can be observed.
Assume that the actions ``consultants A, B, C, D, and F initiate questioning to company Q" can be observed by company Q since company Q can detect the sensor signals from the questioning devices of consultants A, B, C, D, and F, i.e., $\Sigma_o=\{a,b,c,d,f\}$.
The occurrence of actions ``consultants E and G initiate questioning to company Q" cannot be captured by company Q since there are no sensors deployed on the questioning devices of consultants E and G and company Q cannot detect the sensor signals, i.e., $\Sigma_{uo}=\{e, g\}$.
To maximize the concealment of the identity of all consultants, 
consultants A, B, C, D, and F add anonymous processors to their questioning devices, i.e., $ \Gamma=\Sigma_o$.
States M and N represent that consultants M and N are initially reviewing company Q, respectively.
States A--G represent that company Q answers consultants A--G, respectively.

\begin{figure}[htp]
\centering
\includegraphics[width=2.6in]{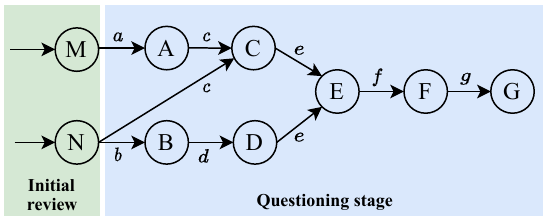}
\caption{System $G^*$: The model of a simple anonymous questioning process for company Q.}
\label{fig:7.1}
\end{figure}

We need to consider two cases for the processing power of anonymous processors.
One is that the processing power of anonymous processors is ideal, which means that actions representing ``consultants A, B, C, D, and F initiate questioning to company Q" can be observed as $\gamma$ by company Q.
The other is that the processing power of anonymous processors is less powerful such that the same result can be obtained for specific actions only performed by the same agent.
For instance, company Q can observe $\gamma_1\gamma_2$ when action string $ac$ or $bd$ happens.
Specifically,

\begin{itemize}
\item The first case corresponds to the strong anonymous projection.
The concurrent composition $Cc(G^*_m, Obs_m(G^*))$ for the modeled system $G^*$ can be constructed to verify $K$-step strong anonymity, which is omitted here to save space.
We conclude that the system is $K$-step strongly anonymous for any value of $K$, suggesting that the modeled system is infinite-step strongly anonymous.
For instance, assume $K=1$.
Consider action strings $s=ce$ and $t=fg$.
Then, $P_{str}(s)=\gamma$ and $P_{str}(st)=\gamma\gamma$ can be obtained.
The delayed state estimate at the instant that the first $\gamma$ has just been observed upon observing $\gamma\gamma$ can be inferred as $\{A, B, C, E\}$, i.e., $\widehat{E}_{G}(P_{str}, \gamma|\gamma\gamma)=\{A, B, C, E\}$.
It means that company Q cannot determine the identity of any precise consultant when the first $\gamma$ is captured even according to $1$ observational step future information if the processing power of anonymous processors is ideal such that company Q cannot distinguish questions from different consultants.
\item The second case corresponds to the weak anonymous projection.
Similarly, we verify $K$-step weak anonymity by constructing the concurrent composition $Cc(\widehat{G}^*_w, Obs_w(\widehat{G}^*))$ of the modeled system $G^*$, which is also omitted here.
By computation, it is found that the modeled system $G^*$ satisfies infinite-step weakly anonymity.
It means that company Q cannot infer the identity of any accurate consultant at any instant even though it knows future information when the processing power of anonymous processors is less powerful such that company Q can only distinct the actions performed by the same consultant.
\end{itemize}

\section{Concluding Remarks}\label{sec:7}
Anonymity and opacity are both information flow properties based on state estimates.
The four $K$- and infinite-step anonymity concepts proposed in this paper are essentially different from the existing $K$- and infinite-step anonymity \cite{Basilio2021, Yin2020} and opacity \cite{Yin2017}--\cite{Balun2023} since we take the strong and weak anonymous projections into account.
Particularly, when the anonymous events are not considered, i.e., $\Gamma=\emptyset$, the strong and weak anonymous projections reduce to the natural projection.
Note that the strong anonymous projection $P_{str}$ can be viewed as a natural extension of the natural projection, but the weak anonymous projection $P_w$ is more complicated, i.e., the weak anonymous projection of a string contains more information than its strong anonymous projection.
For instance, assume $\Gamma=\{a, b\}\subseteq \Sigma_o$.
The strong anonymous projection of strings $aab$ and $abb$ are $P_{str}(aab)=P_{str}(abb)=\gamma\gamma\gamma$, whereas the weak anonymous projection of strings $aab$ and $abb$ are $P_w(aab)=\gamma_1\gamma_1\gamma_2$ and $P_w(abb)=\gamma_1\gamma_2\gamma_2$.
That is, the weak anonymous projection of an anonymous event depends on when it first appears in a string.
As a result, the verification of $K$- and infinite-step anonymity under the weak anonymous projection is much more difficult.
The classical observer cannot capture this scenario accurately.
Therefore, no existing methods for verifying $K$- and infinite-step opacity can be used to verify $K$- and infinite-step anonymity under the weak anonymous projection.

Using the designed concurrent-composition-based verification method, one accurately captures the state information to verify the $K$- and infinite-step anonymity under the weak anonymous projection.
Upper bounds on $K$ for $K$-step strong anonymity and weak anonymity are also obtained.
Furthermore, the proposed methodology is sufficiently general, as it can also be used to verify the $K$- and infinite-step opacity/anonymity under the natural projection efficiently.
A case study on an anonymous questioning process is provided for practical applications.

It is a fact that verifying $K$/infinite-step opacity in discrete-event systems is PSPACE-hard, as established in the literature \cite{Saboori2007,Saboori2012}. 
$K$/infinite-step anonymity \cite{Basilio2021} and $K$/infinite-step opacity are closely related information-flow properties. 
Their verification structures are essentially the same. 
This means that verifying $K$/infinite-step anonymity \cite{Basilio2021} in discrete-event systems is also PSPACE-hard.
As discussed in Remark \ref{re:3.2}, the proposed four anonymity notions reduce to the existing $K$-step and infinite-step anonymity \cite{Basilio2021} when $\Gamma=\emptyset$.
As a result, the verification of the four notions of anonymity proposed in this technical note is also PSPACE-hard.
It seems highly unlikely that there exist polynomial-time algorithms to verify the four types of anonymity considered in this paper unless P=PSPACE.
This means that when handling large-scale practical systems, the proposed algorithms will face challenges such as state space explosion problem (i.e., the state spaces of the constructed verifiers are all exponential) and memory consumption (i.e., the memory usage grows exponentially since the ``Breadth-First Search” traversal requires storing all visited states).
To overcome such limitations, in the future, we will consider employing approximate methods, such as abstraction-refinement \cite{Wang2021} and machine learning \cite{Ding2024}.
The former constructs a series of progressively precise approximate models to approach the verification problems of complex systems in a controllable manner, thereby achieving computational feasibility while ensuring correctness.
The latter provides probabilistic predictions whose accuracy depends on the quality of training data and model capabilities.
Both methods align with the regular structure of the constructed concurrent composition.
In addition, the idea of over-approximation state estimation proposed in \cite{Xiang2025} will also be considered.
To move forward, we will focus on enforcing these four types of anonymity based on the proposed verification method in the future.

\end{document}